\newtheorem{proposition}{Proposition}
\begin{document}

\title{Machine Learning-Based CSI Feedback With Variable Length in FDD Massive MIMO}

\author{Matteo Nerini,
        Valentina Rizzello,~\IEEEmembership{Student Member,~IEEE},
        Michael Joham,~\IEEEmembership{Member,~IEEE},
        Wolfgang Utschick,~\IEEEmembership{Fellow,~IEEE},
        Bruno Clerckx,~\IEEEmembership{Fellow,~IEEE}
\thanks{M. Nerini and B. Clerckx are with the Department of Electrical and Electronic Engineering, Imperial College London, London, SW7 2AZ, U.K. (e-mail: \{m.nerini20, b.clerckx\}@imperial.ac.uk).}
\thanks{V. Rizzello, M. Joham and W. Utschick are with the Professur f\"ur Methoden der Signalverarbeitung, Technische Universit\"at M\"unchen, Munich, 80333, Germany. (e-mail: \{valentina.rizzello, joham, utschick\}@tum.de).}
\thanks{Manuscript received April 11, 2022; revised August 08, 2022.}}

\markboth{Journal of \LaTeX\ Class Files,~Vol.~14, No.~8, August~2021}%
{Nerini \MakeLowercase{\textit{et al.}}: Machine Learning-Based CSI Feedback With Variable Length in FDD Massive MIMO}

\IEEEpubid{0000--0000/00\$00.00~\copyright~2021 IEEE}

\maketitle

\begin{abstract}
To fully unlock the benefits of \gls{mimo} networks, downlink \gls{csi} is required at the \gls{bs}.
In \gls{fdd} systems, the \gls{csi} is acquired through a feedback signal from the \gls{ue}.
However, this may lead to an important overhead in \gls{fdd} massive \gls{mimo} systems.
Focusing on these systems, in this study, we propose a novel strategy to design the \gls{csi} feedback.
Our strategy allows to optimally design variable length feedback, that is promising compared to fixed feedback since users experience channel matrices differently sparse.
Specifically, \gls{pca} is used to compress the channel into a latent space with adaptive dimensionality.
To quantize this compressed channel, the feedback bits are smartly allocated to the latent space dimensions by minimizing the \gls{nmse} distortion.
Finally, the quantization codebook is determined with $k$-means clustering.
Numerical simulations show that our strategy improves the zero-forcing beamforming sum rate by 17\%, compared to CsiNetPro.
The number of model parameters is reduced by 23.4 times, thus causing a significantly smaller offloading overhead.
At the same time, \gls{pca} is characterized by a lightweight unsupervised training, requiring eight times fewer training samples than CsiNetPro.
\end{abstract}

\glsresetall

\begin{IEEEkeywords}
CSI feedback, frequency division duplex, $k$-means clustering, machine learning, massive MIMO, principal component analysis.
\end{IEEEkeywords}

\section{Introduction}

\IEEEPARstart{A}{ccurate} knowledge of the wireless channel, or \gls{csi}, is critical to unlock the full potential benefits of \gls{mimo} systems.
In particular, \gls{csi} at the transmitter and at the receiver allow precoding and combining techniques, respectively, which can enhance the spectrum efficiency in \gls{mimo} wireless networks \cite{cle13}.
\gls{csi} at the transmitter can be easily acquired without feedback from the receiver in \gls{tdd} systems.
In these systems, channel reciprocity holds since the uplink and downlink channels share the same frequency band.
Conversely, \gls{csi} estimation at the transmitter is harder in \gls{fdd} systems, where the uplink-downlink channel reciprocity does not hold in general.
Thus, feedback messages from the \gls{ue} to the \gls{bs} are needed to gain downlink \gls{csi} at the \gls{bs} \cite{lov08}.
This may lead to significant overhead in massive \gls{mimo} systems, where a large number of antennas is employed.
To face this problem, various strategies have been investigated, such as downlink training techniques \cite{cho14}, distributed compressive \gls{csi} estimation for multi-user settings \cite{rao14}, and adaptive \gls{csi} feedback depending on the channel sparsity \cite{gao15}.
Additionally, novel transmission strategies robust with respect to reduced-dimensional \gls{csi} have been proposed \cite{adh13,xu14,dai16}.
\IEEEpubidadjcol

In recent years, \gls{dl} techniques have been also used for \gls{csi} estimation in \gls{fdd} systems, following two main research directions.
In the first, the uplink-to-downlink channel mapping is learned.
In this way, the feedback can be completely removed since the uplink channel knowledge at the \gls{bs} is sufficient to gain also the downlink channel knowledge.
In the second, the channel matrix is compressed with a \gls{dl} architecture to minimize the feedback.

The uplink-to-downlink mapping existence was firstly investigated in \cite{alr19}.
In \cite{alr19,arn19}, a fully connected \gls{nn} is used to map the uplink to the corresponding downlink channels.
The same task is solved more efficiently in \cite{yan19} with a sparse complex-valued \gls{nn}, and in \cite{don19,han20} with a \gls{cnn} treating the space-frequency channel matrix as an image.
Image processing techniques are exploited also in \cite{saf19,wan19}.
Finally, in \cite{riz20}, the channel is firstly compressed with an autoencoder to decrease the feature space dimensionality; then, the uplink-to-downlink mapping is achieved with random forests.
However, when more complex channel models are considered (e.g. accounting for a rich multipath and dynamic environment), the uplink-to-downlink mapping function becomes hard to approximate with \glspl{nn}.
The environment should be sampled with enough resolution to account for small-scale fading effects, which may require extensive sampling campaigns and huge datasets in practice.
For this reason, the vast majority of related studies used \gls{dl} techniques to reduce the feedback overhead, rather than completely remove it.

In \cite{guo21, riz21}, a deep autoencoder is used to embed the downlink channel matrix at the \gls{ue} into a latent space with reduced dimensionality.
In this way, only the embedding of the channel matrix needs to be fed back to the \gls{bs}.
Deep autoencoder-like architectures have been proposed for the same scope in \cite{wen18,liu19,liu20,mas21,cao21,sun21,jo21}.
The CsiNet architecture introduced in \cite{wen18} is extended in \cite{wan18} to exploit the temporal redundancy in a dynamic environment through recurrent layers, and further improved in \cite{guo20a,li20}.
In \cite{liu19}, the channel reconstruction accuracy is enhanced by exploiting also information from the uplink \gls{csi}, assumed available at the \gls{bs}.
In \cite{liu20,mas21}, a \gls{dl}-based framework is proposed to jointly learn \gls{csi} compression and quantization.
In \cite{cao21,sun21,jo21}, the \gls{csi} feedback is designed through lightweight \gls{dl} architectures, characterized by a reduced number of trainable parameters.
To improve the compression performance and reduce the model complexity, the novel concepts of network aggregation and layer binarization have been employed in \cite{lu22}.
Furthermore, the low rank structure of the millimeter wave channels has been exploited by the model-driven \gls{dl} architecture proposed in \cite{guo21b}.

Other related works employ \gls{dl} techniques either at the \gls{bs} or at the \gls{ue} to improve the channel reconstruction quality or the feedback design, respectively.
In \cite{sol19, uts21}, the feedback is generated at the \gls{ue} by sampling only specific entries of the downlink channel matrix.
Then, the original channel matrix is reconstructed at the \gls{bs} with image-restoration \glspl{cnn}.
Similarly, in \cite{lia20}, the channel is compressed at the \gls{ue} based on compressive sensing and then recovered through a \gls{dl} architecture at the \gls{bs}.
In \cite{tur21}, a \gls{nn} has been proposed to directly map noisy pilot observations to their optimal feedback index.
In \cite{che21}, \glspl{nn} are used to design an implicit \gls{csi} feedback mechanism in which the \gls{csi} is mapped into the recommended precoder.
Finally, \gls{dl} architectures have been applied to learn feedback messages with the objective of maximizing the downlink precoding performance in single-cell \cite{soh21}, and multi-cell \cite{guo20b} scenarios.

When applying \gls{dl} models for designing the \gls{csi} feedback, three problems arise.
First, very large training datasets are typically required: e.g., hundreds of thousands of downlink training samples are used in many of the aforementioned studies \cite{alr19,yan19,riz20,guo21,wen18,sun21,guo20a,lu22,lia20,che21,guo20b}.
This problem has been recently addressed by training the \gls{dl} architectures solely with uplink channel samples, assumed available at the \gls{bs} \cite{uts21,riz21,tur21,riz21b,riz22}.
These works exploit the conjecture that learning at the uplink frequency can be transferred at the downlink frequency with no further modification, as proposed for the first time in \cite{uts21} and validated in \cite{riz21b}.
Furthermore, this problem has been addressed through transfer learning in \cite{zen21}.
Second, to optimally generate variable length quantized feedback in massive \gls{mimo} systems, three problems need to be solved: \textit{i)} \gls{csi} compression into a latent space with dimensionality depending on the number of feedback bits, \textit{ii)} allocation of feedback bits to the latent space dimensions, and \textit{iii)} design of the quantization levels.
However, autoencoders compress the \gls{csi} into a latent space with fixed dimensionality, determined by the number of neurons present in their middle layer.
As a result, the latent space dimensionality cannot be adapted to the number of feedback bits, leading to performance degradation in two cases.
On the one hand, the performance significantly degrades when the number of feedback bits is too low compared to the latent space dimensionality, and it is not sufficient to properly quantize the too large latent space.
On the other hand, when a high number of feedback bits is considered, the performance does not improve beyond the upper bound caused by the fixed latent space dimensionality.
Two solutions to this problem are provided in \cite{guo20a,lia20}, where the authors propose a \gls{dl}-based method offering multiple compression ratios.
However, the number of available compression ratios is still limited to four in \cite{guo20a}, and the problem of optimally allocating the feedback bits to the compressed \gls{csi} dimensions is not investigated in neither study.
The problem of allocating the bits to the compressed \gls{csi} dimensions is relevant since they may carry different amount of information.
Nevertheless, this problem remains unexplored in recent literature proposing \gls{dl} solutions.
Third, each \gls{ue} must encode the \gls{csi} by using the \gls{dl} architecture trained at the \gls{bs}.
Thus, the trained architecture parameters need to be offloaded from the \gls{bs} to the \gls{ue} when a new \gls{ue} connects to the \gls{bs}, and after each training session.
This causes an additional fixed overhead in the transmission, which cannot be reduced.

To solve these three problems affecting \gls{dl} models, we propose an alternative approach avoiding the use of \gls{dl}.
Differently from previous literature, we compress the \gls{csi} with \gls{pca}, a classical \gls{ml} technique \cite{bis06}.
\gls{pca} offers a useful property that cannot be found in typical \gls{dl} architectures used for compression: the latent space dimensions explain different variances and can be ordered according to their importance.
This property allows us to introduce a novel and practical bit allocation technique that assigns the available feedback bits to the latent space dimensions.
Coupling \gls{pca} with our novel bit allocation, we show that our technique performs better, or approximately equal, than \gls{dl} architectures recently proposed for the same scope.
Besides the channel reconstruction quality improvement, our approach brings three fundamental benefits compared to related works using \gls{dl} solutions.
First, it is characterized by a lightweight training phase not involving costly iterative optimization algorithms and requiring a reduced number of training samples.
Second, our technique adapts the latent space dimensionality to the number of feedback bits.
Thus, variable length feedbacks can be optimally generated with a unique trained model. 
Third, our technique can be implemented with a variable number of model parameters.
This enables the network operator to trade the reconstruction quality and the communication overhead due to the model parameter offloading in an adaptive manner.
These benefits allow our approach to efficiently design variable length \gls{csi} feedback, whose advantages are promising compared to fixed feedback \cite{gao15,cle08}.
It has been shown that the \gls{csi} reconstruction can be greatly improved by adjusting the feedback overhead according to the sparsity level of the channels \cite{gao15}.
Also in a multi-user setting, the sum rate improves when the total feedback bits are smartly distributed among the users, employing a feedback with adaptive length \cite{cle08}.
The contributions of this paper are summarized as follows.

\textit{First}, we propose a novel \gls{csi} feedback strategy based on \gls{pca} and $k$-means clustering.
In this strategy, \gls{pca} is used to compress the channel matrix into a latent space with adaptive dimensionality, allowing to optimally design the feedback with variable length.
To quantize this compressed channel, the feedback bits are smartly allocated to the principal components in order to minimize a properly defined distortion function.
Finally, on each principal component, the quantization levels are determined with vector quantization.
Specifically, we employ $k$-means clustering since it is the fixed-rate quantization strategy that minimizes the \gls{mse} by construction.

\textit{Second}, we provide theoretical justifications to prove the optimality of our bit allocation to the principal components.
In our approach, the bits are allocated to the principal components by minimizing the \gls{nmse} distortion introduced by the compression and quantization operations.
Despite optimal bit allocation is a popular concept in wireless communications \cite{cle08}, the optimal feedback bits allocation to the \gls{csi} principal components has never been investigated before.
A practical iterative algorithm is proposed to this scope, whose optimality is theoretically guaranteed.

\textit{Third}, we propose an offloading overhead-aware \gls{csi} feedback by improving our \gls{pca}-based strategy with two modifications.
The first reduces the number of offloaded \gls{pca} parameters, while the second reduces the number of offloaded codebook parameters.
With these two modifications, the number of offloaded parameters can be adapted.
Thus, the network operator can select the optimal trade-off between offloading overhead and \gls{csi} reconstruction quality.
To the best of our knowledge, an adaptive offloading overhead has never been considered in previous literature employing \gls{dl} solutions.
Results show that these two modifications only slightly impact the reconstruction performance, while they significantly reduce the number of offloaded parameters.

\textit{Fourth}, we compare the performance of our strategy with two state-of-the-art \gls{dl} autoencoders proposed for the same scope.
As a benchmark, we consider CsiNetPro \cite{li20}, the improved version of the popular CsiNet \cite{wen18}, and the autoencoder more recently proposed in \cite{riz21} whose training is based on uplink data.
Our strategy is characterized by a lightweight training phase, requiring significantly fewer training samples.
The number of offloaded model parameters can be reduced with approximately no \gls{csi} reconstruction quality loss.
Consequently, our strategy causes less offloading overhead than the considered reference autoencoders.
At the same time, our strategy achieves better or approximately the same \gls{csi} reconstruction quality as these autoencoders.

\textit{Organization}: In Section \ref{sec:system-model}, we define the system model and the problem formulation.
In Section~\ref{sec:feedback}, we present our novel \gls{csi} feedback strategy based on \gls{pca} and $k$-means clustering.
In Section~\ref{sec:overhead}, we improve our \gls{csi} feedback strategy by proposing two modifications that allow to significantly decrease the number of offloaded parameters.
In Section~\ref{sec:results}, we assess the obtained performance through numerical simulations.
Finally, Section~\ref{sec:conclusion} contains the concluding remarks.
For reproducible research, the simulation code is available at \url{https://github.com/matteonerini/ml-based-csi-feedback}.

\textit{Notation}: Vectors and matrices are denoted with bold lower and bold upper letters, respectively.
Scalars are represented with letters not in bold font.
$\vert a\vert$, $\Vert\mathbf{a}\Vert$, and $\Vert\mathbf{A}\Vert_{F}$ refer to the absolute value of a complex scalar $a$, $l_{2}$-norm of a vector $\mathrm{\mathbf{a}}$, and Frobenius norm of a matrix $\mathrm{\mathbf{A}}$, respectively.
$\left[\mathrm{\mathbf{a}}\right]_{i}$ denotes the $i$-th element of vector $\mathbf{a}$.
$\mathrm{\mathbf{A}}^{T}$ and $\mathrm{\mathbf{A}}^{H}$ refer to the transpose and conjugate transpose of a matrix $\mathrm{\mathbf{A}}$, respectively.
$\mathbb{N}$, $\mathbb{R}$, and $\mathbb{C}$ denote natural, real, and complex number sets, respectively.
$\mathbb{N^*}$ denotes the positive natural number set.
$\mathbf{0}$ and $\mathbf{I}$ refer to an all-zero matrix and an identity matrix, respectively.
$\mathcal{CN}\left(0,\sigma^2\right)$ denotes the distribution of a \gls{cscg} random variable whose real and imaginary parts are independent normally distributed with mean zero and variance $\sigma^2/2$.
$\mathcal{CN}\left(\mathbf{0},\mathbf{R}\right)$ denotes the distribution of a \gls{cscg} random vector with mean vector $\mathbf{0}$ and covariance matrix $\mathbf{R}$.
Finally, $\mathrm{diag}\left(a_{1},\dots,a_{N}\right)$ refers to a diagonal matrix whose diagonal elements are $a_{1},\dots,a_{N}$.

\section{System Model}
\label{sec:system-model}

Let us consider a multi-user \gls{mimo} system in which a \gls{bs} equipped with $N_{A}$ antennas serves $K$ single-antenna users.
Assuming an \gls{fdd} communication system, we denote with $f_{\text{UL}}$ and $f_{\text{DL}}$ the uplink and downlink center frequencies, respectively.
In both frequencies, the channel bandwidth $W$ is divided into $N_{C}$ \gls{ofdm} subcarriers.

We denote as $x_{\text{UL},q,n_{C}}\in\mathbb{C}$ the uplink signal transmitted by the $q$-th user on the $n_{C}$-th uplink subcarrier, and as $\mathbf{y}_{\text{UL},n_{C}}\in\mathbb{C}^{N_{A}\times 1}$ the uplink signal received by the \gls{bs} on the $n_{C}$-th uplink subcarrier.
Thus, we have
\begin{equation}
\mathbf{y}_{\text{UL},n_{C}} = \sum_{q=1}^{K} \mathbf{h}_{\text{UL},q,n_{C}}x_{\text{UL},q,n_{C}}+\mathbf{n}_{\text{UL},n_{C}},
\end{equation}
where $\mathbf{h}_{\text{UL},q,n_{C}}\in\mathbb{C}^{N_{A}\times 1}$ is the uplink channel vector seen by the $q$-th user on the $n_{C}$-th uplink subcarrier, and $\mathbf{n}_{\text{UL},n_{C}}$ is the \gls{awgn} at the $n_{C}$-th uplink subcarrier.
Similarly, we denote as $\mathbf{x}_{\text{DL},n_{C}}\in\mathbb{C}^{N_{A}\times 1}$ the downlink signal transmitted by the \gls{bs} on the $n_{C}$-th downlink subcarrier, and as $y_{\text{DL},q,n_{C}}$ the downlink signal received by the $q$-th user on the $n_{C}$-th downlink subcarrier.
Thus, we can write
\begin{equation}
y_{\text{DL},q,n_{C}} = \mathbf{h}_{\text{DL},q,n_{C}}\mathbf{x}_{\text{DL},n_{C}}+n_{\text{DL},q,n_{C}},
\end{equation}
where $\mathbf{h}_{\text{DL},q,n_{C}}\in\mathbb{C}^{1\times N_{A}}$ is the downlink channel vector seen by the $q$-th user on the $n_{C}$-th downlink subcarrier, and $n_{\text{DL},q,n_{C}}$ is the \gls{awgn} at the $n_{C}$-th downlink subcarrier.

Let us define $\mathbf{H}_{\text{UL},q}=\left[\mathbf{h}_{\text{UL},q,1},\mathbf{h}_{\text{UL},q,2},\ldots,\mathbf{h}_{\text{UL},q,N_{C}}\right]\in\mathbb{C}^{N_{A}\times N_{C}}$ as the uplink channel matrix and $\mathbf{H}_{\text{DL},q}=\left[\mathbf{h}_{\text{DL},q,1}^{T},\mathbf{h}_{\text{DL},q,2}^{T},\ldots,\mathbf{h}_{\text{DL},q,N_{C}}^{T}\right]\in\mathbb{C}^{N_{A}\times N_{C}}$ as the downlink channel matrix of the $q$-th user.
For simplicity of notation, we drop the index $q$ in the rest of the paper.
We assume that noisy versions of the uplink channel matrices are available at the \gls{bs}, denoted as
\begin{equation}
\widetilde{\mathbf{H}}_{\text{UL}}=\mathbf{H}_{\text{UL}}+\mathbf{N}_{\text{UL}},
\end{equation}
where the entries of $\mathbf{N}_{\text{UL}}\in\mathbb{C}^{N_{A}\times N_{C}}$ are \gls{cscg} distributed such that $\mathrm{E}[\left\Vert\mathbf{H}_{\text{UL}}\right\Vert_F^2/\left\Vert\mathbf{N}_{\text{UL}}\right\Vert_F^2]={SNR}_{\text{UL}}$.
Similarly, a noisy version of the downlink channel matrix is available at each \gls{ue}, which is given by
\begin{equation}
\widetilde{\mathbf{H}}_{\text{DL}}=\mathbf{H}_{\text{DL}}+\mathbf{N}_{\text{DL}},
\end{equation}
where $\mathrm{E}[\left\Vert\mathbf{H}_{\text{DL}}\right\Vert_F^2/\left\Vert\mathbf{N}_{\text{DL}}\right\Vert_F^2]={SNR}_{\text{DL}}$.
In this system model, our goal is to reconstruct the downlink \gls{csi} $\mathbf{H}_{\text{DL}}$ at the \gls{bs} from limited feedback sent by the \gls{ue}.
To this end, we compress and quantize the \gls{csi} using two \gls{ml} techniques: \gls{pca} and $k$-means clustering.

\section{PCA and $k$-means Clustering-Based CSI Feedback Design}
\label{sec:feedback}

In this section, we describe our novel channel estimation process, which consists of two stages.
In the first stage, called \emph{offline learning}, the compression strategy is learned at the \gls{bs}, based on a reduced training dataset.
This stage has three objectives: learning the principal components to compress the \gls{csi}, allocating the feedback bits to the first principal components, and learning the quantization levels through $k$-means clustering.
In the second stage, called \emph{online development}, the \gls{csi} is compressed and quantized at the \gls{ue}, sent to the \gls{bs}, and here reconstructed, as graphically represented in Fig.~\ref{fig:pca}.

\begin{figure*}[t]
    \centering
    \includegraphics[width=0.86\textwidth]{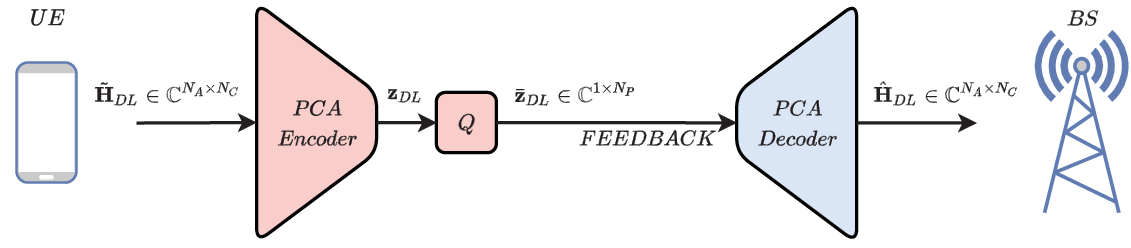}
    \caption{Downlink channel estimation at the BS with quantized feedback from the UE.}
    \label{fig:pca}
\end{figure*}

\subsection{Offline Learning}

To learn the principal components, we propose an offline training stage carried out at the \gls{bs}.
To this end, we consider a training set solely composed of noisy uplink channels, assumed available at the \gls{bs}.
This approach is based on the conjecture that learning at the uplink frequency can be transferred at the downlink frequency with no further modification \cite{uts21,riz21b}.
In this way, downlink \gls{csi} samples, which are hard to collect at the \gls{bs}, are not required in the training stage.
Let us denote with $N_{\text{train}}$ the number of training samples, and with $\widetilde{\mathbf{H}}_{\text{train}}\in\mathbb{C}^{N_{\text{train}}\times N_{A}N_{C}}$ the training set, whose rows are vectorized noisy uplink channels.
Furthermore, we denote with $\boldsymbol{\mu}_{\text{train}}$ the column-wise mean of $\widetilde{\mathbf{H}}_{\text{train}}$, and with $\mathbf{H}_{\text{train}}$ the centered training set obtained by subtracting $\boldsymbol{\mu}_{\text{train}}$ to each row of $\widetilde{\mathbf{H}}_{\text{train}}$.
Thus, the sample covariance matrix of the training set can be obtained as
\begin{equation}
\mathbf{R} =\frac{{\mathbf{H}_{\text{train}}}^{H}\mathbf{H}_{\text{train}}}{N_{\text{train}}-1}.
\end{equation}
With eigenvalue decomposition, $\mathbf{R}$ can be factorized as $\mathbf{R}= \mathbf{V}\mathbf{\Lambda}{\mathbf{V}}^{H}$, where $\mathbf{V}$ is a unitary matrix whose columns are the principal components, and $\mathbf{\Lambda}=\mathrm{diag}\left(\sigma_{1}^{2},\dots,\sigma_{N_{A}N_{C}}^{2}\right)$ is a diagonal matrix containing the $N_{A}N_{C}$ eigenvalues.
Let us assume that the eigenvalues in $\mathbf{\Lambda}$ have been sorted in descending order, and that their respective eigenvectors in $\mathbf{V}$ (i.e., the principal components) have been ordered accordingly.
Thus, we can select the conversion matrix $\mathbf{V}_{N_P}\in\mathbb{C}^{N_{A}N_{C}\times N_{P}}$, containing the first $N_{P}$ columns of $\mathbf{V}$, to project the data onto a latent space with dimensionality $N_{P}$.

After the dimensionality reduction operation, the compressed \gls{csi} needs to be discretized and represented with a limited number of bits $B$.
According to a naive discretization approach, we could fix the latent space dimensionality $N_{P}$ arbitrarily, and quantize the compressed \gls{csi} by allocating $B/N_{P}$ bits to each principal component.
However, this approach would be suboptimal for two reasons.
Firstly, $N_{P}$ should be defined according to the number of feedback bits $B$.
A higher $B$ should allow a larger latent space dimensionality $N_{P}$, while, vice versa, a lower $B$ should reduce $N_{P}$.
Secondly, the $N_{P}$ principal components have decreasing importance, defined as the variance explained by each of them.
Thus, the optimal discretization approach is expected to allocate more bits to more important principal components, and fewer bits to the less important ones.
To solve these two problems, we propose to allocate the $B$ bits to the principal components in a way that minimizes the distortion introduced by the compression and quantization operations.
As a result of this bit allocation process, the number of bits allocated to each principal component is described by the vector $\mathbf{b}=\left[b_1,\dots,b_{N_{A}N_{C}}\right]\in\mathbb{N}^{N_AN_C}$, where $b_{n}$ is the number of bits allocated to the $n$-th principal component.
Thus, we have $\sum_{n=1}^{N_{A}N_{C}} b_{n}=B$ and $N_{P}$ is given by the index of the last non-zero element in $\mathbf{b}$.
The proposed bit allocation process is described in detail in Subsection~\ref{subsec:bit-allocation}.

After determining $N_{P}$ and the bit allocation to the first $N_{P}$ principal components, the quantization levels need to be specified.
To minimize the quantization distortion, measured in terms of \gls{mse}, we define different quantization levels for each principal component, based on $k$-means clustering \cite{bis06}.
To this end, we project the training set $\mathbf{H}_{\text{train}}$ onto the first $N_{P}$ principal components, obtaining $\mathbf{Z}_{\text{train}}=\mathbf{H}_{\text{train}}\mathbf{V}_{N_{P}}\in\mathbb{C}^{N_{\text{train}}\times N_{P}}$.
In this way, the rows of $\mathbf{Z}_{\text{train}}$ are the embeddings of the training set elements into the latent space.
Then, $k$-means clustering is applied independently to each column of $\mathbf{Z}_{\text{train}}$, where for the $n$-th column we have $k=2^{b_{n}}$.
Since each column of $\mathbf{Z}_{\text{train}}$ is a complex vector, each $k$-means clustering problem is considered as a vector quantization problem in the feature space $\mathbb{R}^{2}$.
In this problem, the two dimensions are given by the real and imaginary parts of the $\mathbf{Z}_{\text{train}}$ elements.
Finally, the cluster centers of the $n$-th $k$-means clustering give the quantization levels for the $n$-th latent space dimension.

\subsection{Online Development}

After the \emph{offline learning} stage, the first $N_{P}$ principal components, the vector $\boldsymbol{\mu}_{\text{train}}$, and the codebook of the learned quantization levels are offloaded to the \gls{ue}.
This allows the online feedback process, as represented in Fig.~\ref{fig:pca}.
The \gls{ue} firstly vectorizes the noisy downlink channel matrix $\widetilde{\mathbf{H}}_{\text{DL}}$ into $\tilde{\mathbf{h}}_{\text{DL}}\in\mathbb{C}^{1\times N_{A}N_{C}}$.
Then, it compresses $\mathbf{h}_{\text{DL}}=\tilde{\mathbf{h}}_{\text{DL}}-\boldsymbol{\mu}_{\text{train}}$ by considering only the first $N_{P}$ principal components, obtaining $\mathbf{z}_{\text{DL}}=\mathbf{h}_{\text{DL}}\mathbf{V}_{N_P}\in\mathbb{C}^{1\times N_{P}}$ .
Before the feedback transmission, the compressed \gls{csi} $\mathbf{z}_{\text{DL}}$ is quantized according to the learned codebook.
Therefore, $\bar{\mathbf{z}}_{\text{DL}}$ is obtained by mapping the $\mathbf{z}_{\text{DL}}$ entries to their nearest cluster centers.
In this way, the feedback is fully described by $B$ bits, which are sent from the \gls{ue} to the \gls{bs}.
Finally, the downlink channel estimation process is concluded at the \gls{bs}.
Here, the decoder firstly calculates $\hat{\mathbf{h}}_{\text{DL}}=\bar{\mathbf{z}}_{\text{DL}}\mathbf{V}_{N_P}^{H}\in\mathbb{C}^{1\times N_{A}N_{C}}$; and secondly, it reshapes $\hat{\mathbf{h}}_{\text{DL}}+\boldsymbol{\mu}_{\text{train}}$ into $\widehat{\mathbf{H}}_{\text{DL}}$, which is the downlink channel matrix estimate.

\subsection{Bit Allocation to the Principal Components}
\label{subsec:bit-allocation}

Within the the \emph{offline learning} stage, the optimal bit allocation to the principal components is learned.
As anticipated, this is achieved by minimizing the distortion introduced by the compression and quantization operations.
To measure such a distortion, we define a distortion function $D_f$ given by the \gls{nmse} between the reconstructed vectorized channel
$\hat{\mathbf{h}}_{\text{DL}}$ and the actual vectorized channel available at the UE $\mathbf{h}_{\text{DL}}$
\begin{equation}
D_f = \mathbb{E}\left[\frac{\Vert\mathbf{h}_{\text{DL}}-\hat{\mathbf{h}}_{\text{DL}}\Vert^2}{\Vert\mathbf{h}_{\text{DL}}\Vert^2}\right],
\end{equation}
where the average is over all the channel realizations in the training set.
Without loss of generality, we can write $\mathbf{h}_{\text{DL}}=\mathbf{g}_{\text{DL}}\mathbf{V}^H$, where $\mathbf{g}_{\text{DL}}=\mathbf{h}_{\text{DL}}\mathbf{V}$ is the mapping of $\mathbf{h}_{\text{DL}}$ into the entire principal component space, and $\hat{\mathbf{h}}_{\text{DL}}=\bar{\mathbf{g}}_{\text{DL}}\mathbf{V}^H$, where $\bar{\mathbf{g}}_{\text{DL}}$ is the quantized version of $\mathbf{g}_{\text{DL}}$.
Thus, the distortion function $D_f$ can be rewritten as
\begin{align}
D_f
& = \mathbb{E}\left[\frac{\Vert\mathbf{g}_{\text{DL}}\mathbf{V}^H-\bar{\mathbf{g}}_{\text{DL}}\mathbf{V}^H\Vert^2}{\Vert\mathbf{h}_{\text{DL}}\Vert^2}\right]\\
& = \mathbb{E}\left[\frac{\Vert\left(\mathbf{g}_{\text{DL}}-\bar{\mathbf{g}}_{\text{DL}}\right)\mathbf{V}^H\Vert^2}{\Vert\mathbf{h}_{\text{DL}}\Vert^2}\right]\\
& \leq \mathbb{E}\left[\frac{\Vert\mathbf{g}_{\text{DL}}-\bar{\mathbf{g}}_{\text{DL}}\Vert^2\Vert\mathbf{V}^H\Vert_F^2}{\Vert\mathbf{h}_{\text{DL}}\Vert^2}\right]=\bar{D}_{f},
\end{align}
where the inequality follows from the Cauchy-Schwarz inequality.
We now assume that the training set has been normalized such that for every channel we have $\Vert\mathbf{h}_{\text{DL}}\Vert^2=N_{A}N_{C}$.
This has no impact on the multi-user precoding performance since the precoder design only requires the \gls{cdi}.
Thus, noting that $\Vert\mathbf{V}^H\Vert_F^2$ and $\Vert\mathbf{h}_{\text{DL}}\Vert^2$ are constant, the upper bound on the distortion function $\bar{D}_{f}$ can be simplified as
\begin{align}
\bar{D}_{f}
& = \mathbb{E}\left[\Vert\mathbf{g}_{\text{DL}}-\bar{\mathbf{g}}_{\text{DL}}\Vert^2\right]\\
& = \mathbb{E}\left[\sum_{n=1}^{N_{A}N_{C}} \vert\left[\mathbf{g}_{\text{DL}}\right]_{n}-\left[\bar{\mathbf{g}}_{\text{DL}}\right]_{n}\vert^2\right]\\
& = \sum_{n=1}^{N_{A}N_{C}} \mathbb{E}\left[\vert\left[\mathbf{g}_{\text{DL}}\right]_{n}-\left[\bar{\mathbf{g}}_{\text{DL}}\right]_{n}\vert^2\right].
\end{align}

We notice that the term $\mathbb{E}\left[\vert\left[\mathbf{g}_{\text{DL}}\right]_{n}-\left[\bar{\mathbf{g}}_{\text{DL}}\right]_{n}\vert^2\right]$ is the MSE due to the quantization of the $n$-th principal component.
The MSE is a common metric used to measure the distortion in lossy compression.
Thus, to simplify the notation, we introduce the vector $\mathbf{d}=\left[d_1,\dots,d_{N_{A}N_{C}}\right]$, where $d_{n}=\mathbb{E}\left[\vert\left[\mathbf{g}_{\text{DL}}\right]_{n}-\left[\bar{\mathbf{g}}_{\text{DL}}\right]_{n}\vert^2\right]$.
Consequently, our objective becomes to find the bit allocation $\mathbf{b}$ such that $\bar{D}_{f}=\sum_{n=1}^{N_{A}N_{C}}d_{n}$ is minimized.
Here, the implicit dependence of $\bar{D}_{f}$ on $\mathbf{b}$ lies in the fact that the distortion $d_{n}$ depends on the number of bits $b_{n}$ used to quantize the $n$-th principal component, and on the quantization strategy employed.
Since the fixed-rate quantization strategy that minimizes $d_{n}$ is $k$-means clustering, we consider $d_{n}$ as the minimum distortion obtainable by $k$-means clustering with $k=2^{b_{n}}$.
In the following, we write $d_{n}=d_{n}\left(b_{n}\right)$ to highlight that $d_{n}$ is a function purely dependent on $b_{n}$.

To design an algorithm able to find the optimal bit allocation $\mathbf{b}$, we now introduce two useful propositions.
Let us assume that the channels are Rayleigh distributed, with covariance matrix $\mathbf{R}$.
Thus, we have that the rows of $\mathbf{H}_{\text{train}}$ are distributed as $\mathcal{CN}\left(\mathbf{0},\mathbf{R}\right)$.
Recalling that $\mathbf{R}= \mathbf{V}\mathbf{\Lambda}{\mathbf{V}}^{H}$, the rows of $\mathbf{G}_{\text{train}}=\mathbf{H}_{\text{train}}\mathbf{V}$ are distributed as $\mathcal{CN}\left(\mathbf{0},\mathbf{\Lambda}\right)$ \cite{tse05}.
In other words, since $\mathbf{G}_{\text{train}}$ is the training set projection onto the whole principal components space, we obtain that on every principal component the training set entries are \gls{cscg} distributed.
More precisely, each entry of the $n$-th column of $\mathbf{G}_{\text{train}}$ is distributed as $\mathcal{CN}\left(0,\sigma_{n}^{2}\right)$, meaning that on the $n$-th principal component the training set entries are distributed as $\mathcal{CN}\left(0,\sigma_{n}^{2}\right)$.
This property is used to derive the following two propositions.

\begin{proposition}
If the optimal allocation of $B$ bits is $\mathbf{b}=\left[b_1,\dots,b_{N_{A}N_{C}}\right]$, the optimal allocation of $B+1$ bits is $\mathbf{b}^\prime=\left[b_1,\dots,b_{m}+1,\dots,b_{N_{A}N_{C}}\right]$ for some index $m\in\left[1,N_AN_C\right]$.
\label{pro:1}
\end{proposition}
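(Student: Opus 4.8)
The plan is to recast the statement as a convex separable resource-allocation problem and to show that the \emph{best} single-bit increment of an optimal $B$-allocation is itself optimal for budget $B+1$. Write $\Delta_n(b)=d_n(b)-d_n(b+1)\geq 0$ for the distortion reduction obtained by assigning one extra bit to the $n$-th principal component. The best single increment of $\mathbf{b}$ is then $\mathbf{b}+\mathbf{e}_m$ (with $\mathbf{e}_m$ the $m$-th unit vector and $m=\arg\max_n \Delta_n(b_n)$), achieving $\bar{D}_f(\mathbf{b})-\Delta^*$ where $\Delta^*=\max_n \Delta_n(b_n)$. Letting $\mathbf{c}$ be any optimal allocation for $B+1$, it suffices to prove $\bar{D}_f(\mathbf{b})-\bar{D}_f(\mathbf{c})\leq\Delta^*$: this forces $\bar{D}_f(\mathbf{b}+\mathbf{e}_m)\leq\bar{D}_f(\mathbf{c})$, and since $\mathbf{c}$ is optimal the two distortions must coincide, making $\mathbf{b}+\mathbf{e}_m$ an optimal $(B+1)$-allocation of exactly the claimed incremental form.

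Two ingredients feed the key inequality. \emph{First}, a convexity property: the marginal gains $\Delta_n(b)$ are nonincreasing in $b$, i.e.\ $d_n$ is convex in its integer argument. This is the decreasing-returns behaviour of optimal fixed-rate quantization, and it is exactly where the CSCG structure enters: each $d_n(b_n)$ is the minimal $k$-means distortion (with $k=2^{b_n}$) of a $\mathcal{CN}(0,\sigma_n^2)$ source, a scaled copy of a single normalized distortion-rate curve, so establishing (or citing) convexity of that curve is the first step. \emph{Second}, an optimality/exchange condition for $\mathbf{b}$: since no transfer of one bit from a component $i$ (with $b_i\geq 1$) to a component $j$ can lower the budget-$B$ distortion, we get $\Delta_i(b_i-1)\geq\Delta_j(b_j)$ for all such $i,j$; in particular every ``last bit already placed'' has marginal gain at least $\Delta^*$.

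With these in hand I would expand the telescoping differences as $\bar{D}_f(\mathbf{b})-\bar{D}_f(\mathbf{c})=\sum_{n:\,c_n>b_n}\sum_{j=b_n}^{c_n-1}\Delta_n(j)-\sum_{n:\,c_n<b_n}\sum_{j=c_n}^{b_n-1}\Delta_n(j)$ and bound each group. Convexity gives $\Delta_n(j)\leq\Delta_n(b_n)\leq\Delta^*$ for every \emph{added} bit ($j\geq b_n$), while convexity together with the exchange condition gives $\Delta_n(j)\geq\Delta_n(b_n-1)\geq\Delta^*$ for every \emph{removed} bit ($j\leq b_n-1$). Counting $P^+=\sum_{n:\,c_n>b_n}(c_n-b_n)$ added bits and $P^-=\sum_{n:\,c_n<b_n}(b_n-c_n)$ removed bits, the difference is at most $P^+\Delta^*-P^-\Delta^*=(P^+-P^-)\Delta^*$; since $\mathbf{c}$ and $\mathbf{b}$ differ by one bit in total, $P^+-P^-=1$, yielding $\bar{D}_f(\mathbf{b})-\bar{D}_f(\mathbf{c})\leq\Delta^*$ as required.

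The main obstacle is the convexity property of $d_n(b_n)$: unlike the exchange condition, which is immediate from optimality of $\mathbf{b}$, convexity of the operational distortion-rate function of a Gaussian source at integer rates is not a triviality. In high resolution it follows from the $2^{-2b}$ scaling, but a clean argument valid for all $b$ --- presumably the role of the CSCG distributional fact just established and of the companion proposition --- is what the whole nestedness claim rests on; were the marginal gains allowed to increase, the optimal allocations for $B$ and $B+1$ would not in general be a single increment apart.
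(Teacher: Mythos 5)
Your argument is correct in outline but takes a genuinely different route from the paper's. The paper does not run the greedy-exchange/telescoping argument you propose: given an arbitrary $(B+1)$-allocation $\mathbf{b}^\prime$, it picks a component $m$ with $b_m^\prime=b_m+C$, $C\geq 1$, and compares $\mathbf{b}+\mathbf{e}_m$ against $\mathbf{b}^\prime$ by splitting the target inequality into (i) optimality of $\mathbf{b}$ at budget $B$, applied to the $B$-bit allocation obtained from $\mathbf{b}^\prime$ by removing one bit from component $m$, and (ii) the single diminishing-returns inequality $d_m(b_m+1)-d_m(b_m)\leq d_m(b_m+C)-d_m(b_m+C-1)$ on that one component. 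This is leaner than your route, which needs nonincreasing marginal gains on every component over a whole range of rates plus the one-bit-transfer optimality condition $\Delta_i(b_i-1)\geq\Delta_j(b_j)$; what your version buys in exchange is that it exhibits the optimal increment explicitly as the maximizer of $\Delta_n(b_n)$, i.e.\ precisely the selection rule of Algorithm~\ref{alg:bit-allocation}, whereas the paper's proposition only asserts existence of some suitable $m$. On the obstacle you flag: the paper does not prove convexity of the operational $k$-means distortion either. It states that no closed form is available and substitutes the Gaussian distortion-rate function $d_n(b)=\sigma_n^2 2^{-b}$ (a lower bound on the MSE, justified by the complex Gaussian distribution of the projections onto the principal components), under which both its inequality (ii) and your monotonicity of $\Delta_n(b)$ become one-line computations. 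So your proof closes under exactly the idealization the paper adopts; neither argument establishes nestedness for the true $k$-means distortion curve, and you are right that without some form of discrete convexity of $d_n$ the claim would fail.
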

\begin{proof}
Please refer to Appendix~A.
\end{proof}
In other words, the optimal allocation of $B+1$ bits can be obtained from the optimal allocation of $B$ bits by adding one bit to the $m$-th principal component, for some $m\in\left[1,N_AN_C\right]$.
This allows us to recursively find the optimal bit allocation.
\begin{proposition}
If the optimal allocation of $B$ bits is $\mathbf{b}=\left[b_1,\dots,b_{N_{A}N_{C}}\right]$, we have $b_{n} \leq b_{n-1}\;\forall n\in\left[2,N_AN_C\right]$.
\label{pro:2}
\end{proposition}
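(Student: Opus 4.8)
The plan is to prove the slightly stronger statement that for \emph{every} bit budget there exists a monotone optimal allocation, and to argue this by induction on $B$ using Proposition~\ref{pro:1}. I would deliberately route the proof through Proposition~\ref{pro:1} rather than through a direct one-step exchange argument, because the latter would force me to establish convexity of the per-component distortion in the rate, whereas the recursive structure of Proposition~\ref{pro:1} lets me avoid that. The monotonicity $b_n\le b_{n-1}$ will then fall out of the fact that the principal components are ordered by decreasing variance $\sigma_1^2\ge\sigma_2^2\ge\cdots\ge\sigma_{N_{A}N_{C}}^2$, so that a more important component is never assigned fewer bits than a less important one.

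The enabling observation is a scale-equivariance property of the per-component distortion. Since the entries on the $n$-th principal component are distributed as $\mathcal{CN}(0,\sigma_n^2)$, they equal $\sigma_n$ times a standard (unit-variance) complex Gaussian, viewed as an isotropic point in $\mathbb{R}^2$. Because $k$-means minimizes a sum of squared Euclidean distances, rescaling the data by $\sigma_n$ rescales the optimal centroids by $\sigma_n$ and the minimal MSE by $\sigma_n^2$. Hence $d_n(b)=\sigma_n^2\,\phi(b)$, where $\phi(b)$ is the minimal $k$-means distortion of a standard complex Gaussian quantized with $2^{b}$ levels. The function $\phi$ is universal, i.e.\ independent of $n$, and non-increasing in $b$ since adding centroids can only reduce the distortion. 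Thus the \emph{only} dependence of $d_n$ on the component index is the multiplicative factor $\sigma_n^2$, which is precisely the ordering I want to exploit.

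First I would take the induction base at $B=0$, where the unique optimum is $\mathbf{b}=\mathbf{0}$ and monotonicity is trivial. For the inductive step, I assume a monotone optimal allocation $\mathbf{b}$ for $B$ bits. By Proposition~\ref{pro:1}, the allocation obtained from $\mathbf{b}$ by adding one bit to some component $m$ is optimal for $B+1$ bits. If $m=1$ or $b_{m-1}>b_m$, this allocation is already monotone. The only problematic case is $b_{m-1}=b_m=:b$, which by the inductive hypothesis is the sole remaining possibility. Here I would compare placing the extra bit on $m$ against placing it on the more important neighbour $m-1$: using $d_n(b)=\sigma_n^2\phi(b)$, the induced change in the total distortion $\bar D_f$ equals $\big(\phi(b)-\phi(b+1)\big)\big(\sigma_m^2-\sigma_{m-1}^2\big)$, which is non-positive because $\phi$ is non-increasing and $\sigma_{m-1}^2\ge\sigma_m^2$. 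Since the original placement was optimal, this difference must vanish, so placing the extra bit on $m-1$ is equally optimal.

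The main obstacle is bookkeeping rather than analysis: a single such move need not restore monotonicity when several consecutive components share the same bit count, i.e.\ on a \emph{plateau}. I would resolve this by iterating the move, sliding the extra bit one index to the left at each step while preserving optimality by the identical computation, until it reaches an index $j$ with $b_{j-1}>b_j$ or $j=1$. Because the index strictly decreases, the procedure terminates in finitely many steps and yields a monotone optimal allocation for $B+1$, closing the induction. I would finish by noting that each inequality in the chain is strict whenever the eigenvalues are distinct and $\phi$ is strictly decreasing, in which case the optimal allocation is unique and therefore itself monotone; otherwise the statement is read as the existence of a monotone optimum, which is exactly the allocation returned by the recursive procedure described after Proposition~\ref{pro:1}.
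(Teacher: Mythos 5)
Your argument is correct, but it takes a genuinely different route from the paper. The paper's proof is a direct, self-contained exchange argument: supposing $b_{n}>b_{n-1}$ in an optimal allocation, it swaps the two bit counts and shows the total distortion strictly decreases, using the model $d_{n}\left(b\right)=\sigma_{n}^{2}2^{-b}$; the inequality collapses to $\sigma_{n}^{2}<\sigma_{n-1}^{2}$ and no induction or appeal to Proposition~\ref{pro:1} is needed. You instead induct on $B$, invoke Proposition~\ref{pro:1} to reduce to a single added bit, and repair monotonicity by sliding that bit leftward along a plateau; the bookkeeping is all sound, including the observation that the only violation can occur at a tie $b_{m-1}=b_{m}$. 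Two remarks on the trade-off. First, your stated reason for avoiding the direct exchange --- that it ``would force me to establish convexity of the per-component distortion in the rate'' --- is mistaken: the swap of the \emph{full} bit counts of two adjacent components needs only the factorization $d_{n}\left(b\right)=\sigma_{n}^{2}\phi\left(b\right)$ with $\phi$ decreasing, not convexity of $\phi$ (convexity is what Proposition~\ref{pro:1} needs, and by routing through it you inherit that assumption anyway rather than escape it). Second, your scale-equivariance identity $d_{n}\left(b\right)=\sigma_{n}^{2}\phi\left(b\right)$ is actually a cleaner foundation than the paper's: it is \emph{exact} for the true $k$-means distortion under the \gls{cscg} assumption (by the same scaling argument as Proposition~\ref{pro:3}), whereas the paper substitutes the Gaussian distortion-rate function, which is only a lower bound on the achievable distortion. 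Combining your factorization with the paper's direct swap would give the shortest and most rigorous proof. Finally, like the paper, you gloss over non-uniqueness: in the presence of ties your induction only yields \emph{existence} of a monotone optimum, which you acknowledge, and which is all that is needed to justify the restricted search in Algorithm~\ref{alg:bit-allocation}.
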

\begin{proof}
Please refer to Appendix~B.
\end{proof}
Note that the $\mathbf{b}$ elements are non increasing since the principal components are ordered according to their importance.
This agrees with our intuition that more feedback bits should be allocated to more important principal components, and fewer bits to the less important ones.

\begin{algorithm}[t]
\KwIn{$B$, $\mathbf{G}_{\text{train}}$}
\KwOut{$\mathbf{b}$}
%
$\mathbf{b}\leftarrow\left[0,\dots,0\right]\in\mathbb{N}^{N_AN_C}$\;
\For{$i\leftarrow 1$ \KwTo $B$}{
    $\mathbf{\Delta d}\leftarrow\left[0,\dots,0\right]\in\mathbb{R}^{N_AN_C}$\;
    \For{$n\leftarrow 1$ \KwTo $N_{A}N_{C}$}{ 
        \uIf{$n==1$}{
            $\Delta d_{n}\leftarrow d_{n}(b_{n})-d_{n}(b_{n}+1)$\;
        }
        \ElseIf{$b_{n}<b_{n-1}$}{
            $\Delta d_{n}\leftarrow d_{n}(b_{n})-d_{n}(b_{n}+1)$\;
        }
    }
    $\left[\sim,m\right]\leftarrow \text{max}(\mathbf{\Delta d})$\;
    $b_{m}\leftarrow b_{m}+1$\;
}
\KwRet{$\mathbf{b}$}
\caption{Bit allocation to the principal components.}
\label{alg:bit-allocation}
\end{algorithm}

We now use these two propositions to find the optimal bit allocation $\mathbf{b}$ trough Alg.~\ref{alg:bit-allocation}.
The objective of Alg.~\ref{alg:bit-allocation} is to return the bit allocation vector $\mathbf{b}$ that minimizes the distortion function $\bar{D}_{f}=\sum_{n=1}^{N_{A}N_{C}}d_{n}$.
To do so, we propose an iterative approach in which one bit at a time is added in $\mathbf{b}$, until all $B$ bits are allocated, i.e., until $\sum_{n=1}^{N_{A}N_{C}} b_{n}=B$.
Note that this iterative approach is able to lead to the optimal bit allocation because of Proposition~\ref{pro:1}.
At the $i$-th iteration, the index of the $\mathbf{b}$ element receiving the $i$-th bit could be found with exhaustive search.
More precisely, for all the $N_AN_C$ principal components, the decrease in distortion caused by the additional bit could be computed and stored in the vector $\mathbf{\Delta d}=\left[\Delta d_1,\dots,\Delta d_{N_AN_C}\right]\in\mathbb{R}^{N_AN_C}$, where $\Delta d_{n}=d_{n}(b_{n})-d_{n}(b_{n}+1)$.
Consequently, the $i$-th bit could be allocated to the principal component experiencing the highest decrease in distortion (i.e., the $m$-th principal component in Alg.~\ref{alg:bit-allocation}).
However, this exhaustive search would require to explore $N_AN_C$ possibilities for the allocation of each bit.
To significantly reduce the search space from $N_AN_C$ to only a few plausible principal components we resort to Proposition~\ref{pro:2}.
If at the $i$-th iteration we have $b_{n}=b_{n-1}$, the $n$-th principal component will certainly not receive the $i$-th bit.
Thus, the index $n$ can be excluded from the search.
As a consequence of Proposition~\ref{pro:2}, the number of plausible principal components that are eligible to receive the $i$-th bit becomes $b_1+1$ (i.e., a few units in practice).

The inputs of Alg.~\ref{alg:bit-allocation} are the number of feedback bits $B$, and the matrix $\mathbf{G}_{\text{train}}$, representing the projection of the training set $\mathbf{H}_{\text{train}}$ onto the whole principal components space.
Here, $\mathbf{G}_{\text{train}}$ is needed to compute $d_{n}\left(b_{n}\right)$ trough $k$-means clustering.\footnote{Note that if $N_{\text{train}}\leq N_{A}N_{C}$, only the first $N_{\text{train}}-1$ columns of $\mathbf{G}_{\text{train}}$ are not null.
Nevertheless, this does not affect our discussion if $N_{\text{train}}$ is sufficiently large, i.e., $N_{\text{train}}>N_{P}$, where $N_{P}$ is the last non-zero element in $\mathbf{b}$.}
Furthermore, $d_{n}\left(0\right)$ is the maximum distortion for the $n$-th principal component, obtained by quantizing it with 0 bits.
Thus, we have that $d_{n}\left(0\right)$ is the variance of the $n$-th column of $\mathbf{G}_{\text{train}}$, i.e., $d_{n}\left(0\right)=\sigma_{n}^{2}$.
Finally, the output of Alg.~\ref{alg:bit-allocation} is the vector $\mathbf{b}$, whose $n$-th element represents the number of bits allocated to the $n$-th principal component.
The latent space dimensionality $N_{P}$ is returned implicitly as the index of the last non-zero element in $\mathbf{b}$.
%
We remark that this bit allocation strategy is possible because of a specific property of \gls{pca}.
In \gls{pca}, once the matrix $\mathbf{V}$ is constructed upon a single training phase, all the latent space dimensionalities $N_{P} \in [1, \min\{N_{\text{train}}-1, N_{A}N_{C}\}]$ can be considered for compression.
Conversely, in autoencoders, this degree of freedom is not present since the latent space dimensionality is completely determined by the number of its middle layer neurons.

\subsection{Computational Complexity}
The computational load required by our \gls{csi} feedback strategy is here assessed. 
Firstly, the complexity needed to obtain the sample covariance matrix $\mathbf{R}$ and its eigenvalue decomposition is $\mathcal{O}\left(N_A^2N_C^2N_{\text{train}}\right)$ and $\mathcal{O}\left(N_A^3N_C^3\right)$, respectively.
Secondly, the bits are allocated to the principal components through Alg.~\ref{alg:bit-allocation}, whose complexity is upper bounded by $\mathcal{O}\left(B\left(b_1+1\right)N_{\text{train}}2^{b_1+1}\right)$.
In fact, Alg.~\ref{alg:bit-allocation} consists of at most $B\left(b_1+1\right)$ $k$-means clustering trainings, each with complexity $\mathcal{O}\left(nkd\right)$, where $n$ is the number of $d$-dimensional points and $k$ is the number of clusters \cite{bac16}.
In Alg.~\ref{alg:bit-allocation}, we have $n=N_{\text{train}}$, $k\leq2^{b_1}$ and $d=2$.
Third, the quantization levels are determined through $N_P$ $k$-means clustering operations on $\mathbf{Z}_{\text{train}}$.
The cost of generating $\mathbf{Z}_{\text{train}}$ is $\mathcal{O}\left(N_AN_CN_PN_{\text{train}}\right)$ while the complexity of the final $k$-means clustering is $\mathcal{O}\left(N_PN_{\text{train}}2^{b_1+1}\right)$.
These operations are all performed offline at the \gls{bs}.
During the online deployment, the complexity is simply given by $\mathcal{O}\left(N_AN_CN_P\right)$ both at the encoder and at the decoder side.

\section{Offloading Overhead-Aware CSI Feedback Design}
\label{sec:overhead}

When using a \gls{dl}-based or \gls{pca}-based \gls{csi} feedback strategy, the relevant model parameters need to be offloaded from the \gls{bs} to the \gls{ue} after the training phase.
In the case of \gls{dl}-based \gls{csi} feedback, these parameters are the weights of the neurons composing the deep architecture involved.
In the case of \gls{pca}-based \gls{csi} feedback, these parameters are the element of the complex matrix $\mathbf{V}_{N_{P}}$ and the complex vector $\boldsymbol{\mu}_{\text{train}}$.
Additionally, in both cases, the actual feedback is a quantized version of the compressed channel.
When $k$-means clustering is used to minimize the \gls{mse}, also the optimal quantization levels need to be offloaded from the \gls{bs} to the \gls{ue}.
This necessary offloading process causes an additional overhead in the transmission.
Thus, it is crucial to design a \gls{csi} feedback strategy being aware that this overhead should be minimum.
In this section, we propose two modifications to our \gls{pca}-based \gls{csi} feedback strategy to reduce the number of offloaded variables.
The first modification reduces the number of offloaded \gls{pca} parameters, while the second reduces the number of offloaded $k$-means clustering levels.

\subsection{Reduction of the Number of Offloaded Model Parameters}

In our \gls{pca}-based \gls{csi} feedback as proposed in Section~\ref{sec:feedback}, the \gls{ue} needs $\mathbf{V}_{N_{P}}$ in order to compress the channel along the first $N_{P}$ principal components.
Thus, the number of real model parameters which are offloaded is
\begin{equation}
N_{O}^{\text{model}}=2N_{A}N_{C}N_{P}+2N_{A}N_{C},
\label{eqn:no-model}
\end{equation}
where the two additive terms are respectively due to the complex matrix $\mathbf{V}_{N_{P}}$ and the complex vector $\boldsymbol{\mu}_{\text{train}}$.
To decrease this number, we observe that the principal components can be sparsified in the angular-delay domain.
Let us now consider the general case in which the \gls{bs} is a \gls{upa} with dimensions $N_{X}\times N_{Y}$, where $N_{X}N_{Y}=N_{A}$.
Here, we define $\mathbf{v}_{n}\in\mathbb{C}^{N_{A}N_{C}\times 1}$ as the $n$-th principal component, i.e., the $n$-th column of $\mathbf{V}_{N_{P}}$.
This vector $\mathbf{v}_{n}$ can be reshaped into a tensor of dimensions $N_{X}\times N_{Y}\times N_{C}$ in order to highlight its two spatial and frequency domains.
We denote the vectorized \gls{dft3} of such a tensor as $\mathbf{f}_{n}\in\mathbb{C}^{N_{A}N_{C}\times 1}$.
Thus, $\mathbf{f}_{n}$ turns out to be sparse since it contains the information of $\mathbf{v}_{n}$ in the two angular and delay domains.
In this way, each principal component $\mathbf{v}_{n}$ can be approximately reconstructed by considering only the significantly non-zero entries of $\mathbf{f}_{n}$.

\begin{figure*}[t]
    \centering
    \includegraphics[width=0.86\textwidth]{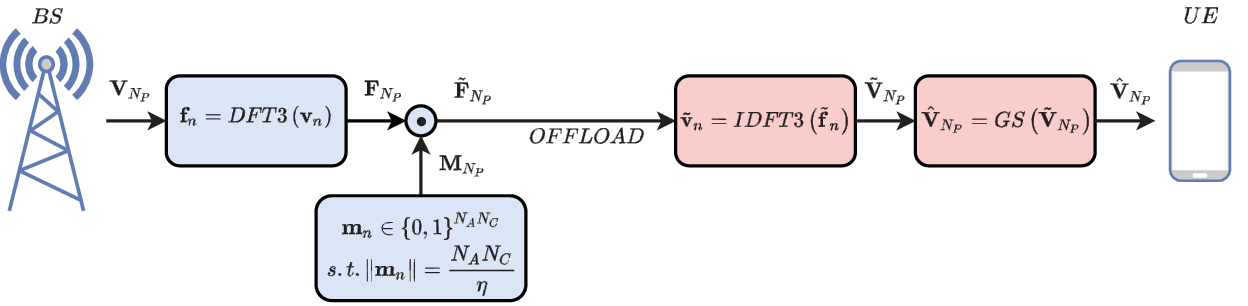}
    \caption{Model parameters offload from the BS to the UE.}
    \label{fig:pca-eta}
\end{figure*}

To decrease the number of offloaded model parameters by exploiting this sparsity property, we introduce a new design parameter $\eta$, such that only a fraction $1/\eta$ of all the $\mathbf{f}_{n}$ entries is offloaded, with $\eta\in\left[1,N_AN_C\right]$.
In detail, let us define the binary mask vector $\mathbf{m}_{n}\in\{0,1\}^{N_{A}N_{C}\times 1}$, containing $\frac{N_{A}N_{C}}{\eta}$ ones in correspondence of the largest values in $\mathbf{f}_{n}$.\footnote{The wording ``largest'' is intended in terms of absolute value, since the $\mathbf{f}_{n}$ elements are complex.}
To retain only the most significant $\frac{N_{A}N_{C}}{\eta}$ values from $\mathbf{f}_{n}$, we consider the Hadamard product $\tilde{\mathbf{f}}_{n}=\mathbf{f}_{n}\odot\mathbf{m}_{n}$.
Thus, the sparse vector $\tilde{\mathbf{f}}_{n}$ is offloaded from the \gls{bs} to the \gls{ue} instead of $\mathbf{v}_{n}$.
In matrix form, we have that $\widetilde{\mathbf{F}}_{N_{P}}=\mathbf{F}_{N_{P}}\odot\mathbf{M}_{N_{P}}$ is offloaded instead of $\mathbf{V}_{N_{P}}$, where $\widetilde{\mathbf{F}}_{N_{P}}=\left[\tilde{\mathbf{f}}_{1},\dots,\tilde{\mathbf{f}}_{N_{P}}\right]$, $\mathbf{F}_{N_{P}}=\left[\mathbf{f}_{1},\dots,\mathbf{f}_{N_{P}}\right]$, and $\mathbf{M}_{N_{P}}=\left[\mathbf{m}_{1},\dots,\mathbf{m}_{N_{P}}\right]$.

At the \gls{ue} side, each vector $\tilde{\mathbf{f}}_{n}$ is firstly reshaped into a tensor of dimensions $N_{X}\times N_{Y}\times N_{C}$.
Then, each principal component $\tilde{\mathbf{v}}_{n}$ is obtained by vectorizing the \gls{idft3} of such a tensor.
Note that the resulting matrix $\widetilde{\mathbf{V}}_{N_{P}}=\left[\tilde{\mathbf{v}}_{1},\dots,\tilde{\mathbf{v}}_{N_{P}}\right]$ is nearly semi-unitary, i.e., $\widetilde{\mathbf{V}}_{N_{P}}^{H}\widetilde{\mathbf{V}}_{N_{P}}\approx\mathbf{I}$, but not semi-unitary since the least significant entries in $\mathbf{F}_{N_{P}}$ have been pruned to reduce the offloading overhead.
For this reason, a last step is needed to ensure that the reconstructed principal component matrix $\widehat{\mathbf{V}}_{N_{P}}=\left[\hat{\mathbf{v}}_{1},\dots,\hat{\mathbf{v}}_{N_{P}}\right]$ is semi-unitary, i.e., $\hat{\mathbf{v}}_{n}\perp \hat{\mathbf{v}}_{m}\;\forall n\neq m$ and $\Vert\hat{\mathbf{v}}_{n}\Vert=1\;\forall n$.
This is achieved by applying the Gram-Schmidt process to the columns of $\widetilde{\mathbf{V}}_{N_{P}}$ yielding $\widehat{\mathbf{V}}_{N_{P}}=GS\left(\widetilde{\mathbf{V}}_{N_{P}}\right)$.\footnote{This Gram-Schmidt process has complexity $\mathcal{O}\left(N_{A}N_{C}N_{P}^{2}\right)$.
Since it is needed only during the offline training stage, we assume that it can be performed at the \gls{ue} with a negligible impact on the energy consumption.
Alternatively, this operation can be replaced by computing the orthogonal factor of the polar decomposition of $\widetilde{\mathbf{V}}_{N_{P}}$ as $\widehat{\mathbf{V}}_{N_{P}}=\widetilde{\mathbf{V}}_{N_{P}}\left(\widetilde{\mathbf{V}}_{N_{P}}^{H}\widetilde{\mathbf{V}}_{N_{P}}\right)^{-1/2}$, or its first order Taylor approximation as $\widehat{\mathbf{V}}_{N_{P}}=\widetilde{\mathbf{V}}_{N_{P}}-\frac{1}{2}\widetilde{\mathbf{V}}_{N_{P}}\left(\widetilde{\mathbf{V}}_{N_{P}}^{H}\widetilde{\mathbf{V}}_{N_{P}}-\mathbf{I}\right)$.
Gram-Schmidt has been selected since it has the lowest complexity and the best reconstruction accuracy among the three.}
To summarize, the proposed modification to reduce the number of offloaded model parameters is graphically represented in Fig.~\ref{fig:pca-eta}.
The channel embedding $\mathbf{z}_{\text{DL}}$ is now computed at the \gls{ue} using  $\widehat{\mathbf{V}}_{N_{P}}$ instead of the exact matrix $\mathbf{V}_{N_{P}}$.
Thus, the optimal bit allocation and the quantization levels should also be computed at the \gls{bs} using the training set projection onto $\widehat{\mathbf{V}}=\left[\hat{\mathbf{v}}_{1},\dots,\hat{\mathbf{v}}_{N_{A}N_{C}}\right]$.
To this end, all the aforementioned operations are applied at the \gls{bs} as part of the offline training process.
More precisely, after the matrix $\mathbf{V}$ is computed with \gls{pca}, it is directly transformed into $\widehat{\mathbf{V}}$ by applying the aforementioned operations.
The only difference with the process depicted in Fig.~\ref{fig:pca-eta} is that in this case the operations are not limited to the first $N_{P}$ principal components.
Subsequently, $\widehat{\mathbf{G}}_{\text{train}}=\mathbf{H}_{\text{train}}\widehat{\mathbf{V}}$ is used in Alg.~\ref{alg:bit-allocation} instead of $\mathbf{G}_{\text{train}}$ to compute the optimal bit allocation and $N_{P}$.
In addition, $\widehat{\mathbf{Z}}_{\text{train}}=\mathbf{H}_{\text{train}}\widehat{\mathbf{V}}_{N_{P}}$ is used to find the optimal quantization levels with $k$-means clustering instead of $\mathbf{Z}_{\text{train}}$.

With this modification, the number of offloaded real model parameters is reduced to
\begin{equation}
N_{O}^{\text{model}}=2\frac{N_{A}N_{C}}{\eta}N_{P}+\frac{N_{A}N_{C}}{\eta}N_{P}+2N_{A}N_{C},
\label{eqn:no-model-mod}
\end{equation}
where the three additive terms are respectively due to the non-zero complex elements of $\widetilde{\mathbf{F}}_{N_{P}}$, the positions of these elements within $\widetilde{\mathbf{F}}_{N_{P}}$, and the complex vector $\boldsymbol{\mu}_{\text{train}}$.
Finally, we remark that this modification allows to introduce an adaptive offloading overhead, as a function of $\eta$.
Thus, $\eta$ could be designed by the network operator as an adaptive parameter to trade offloading overhead impact with channel reconstruction quality.
This degree of freedom is not present in the existing literature where \gls{dl} architectures are used for \gls{csi} feedback.

\subsection{Reduction of the Number of Offloaded $k$-means Clustering Parameters}

Let us now assume that the bit allocation $\mathbf{b}=\left[b_1,\dots,b_{N_{A}N_{C}}\right]$ and $N_{P}$ have been designed through Alg.~\ref{alg:bit-allocation} considering $B$ as the maximum feedback length allowed.
In our \gls{pca}-based \gls{csi} feedback as proposed in Section~\ref{sec:feedback}, the quantization levels are obtained for each principal component, generating $N_{P}$ codebooks which all need to be offloaded from the \gls{bs} to the \gls{ue}.
Thus, to allow the \gls{ue} to generate a feedback of any length less or equal than $B$, we need to offload a codebook including $\sum_{r=1}^{b_{n}}2^{r}$ complex quantization levels for the $n$-th principal component.
In addition, for any feedback length less or equal than $B$, the \gls{ue} needs to know the bit allocation, i.e., how the bits are allocated to the principal components.
To provide the \gls{ue} with all the possible bit allocations, the \gls{bs} offloads the vector $\mathbf{m}=\left[m_1,\dots,m_{B}\right]$, where $m_i$ is the index of the principal component receiving the $i$-th bit in Alg.~\ref{alg:bit-allocation}.
The vector $\mathbf{m}$ is sufficient to describe all the bit allocations with length from $1$ to $B$ because of Proposition~\ref{pro:1}.
Eventually, the number of real $k$-means clustering parameters which are offloaded is
\begin{equation}
N_{O}^{k\text{-means}}=\left(2\sum_{n=1}^{N_{P}}\sum_{r=1}^{b_{n}}2^{r}\right)+B,
\label{eqn:no-kmeans}
\end{equation}
where the two additive terms are respectively due to the complex $k$-means clustering levels and the vector $\mathbf{m}$.

Now, our goal is to decrease the number of offloaded $k$-means clustering levels by considering a unique codebook that can be automatically adapted at the \gls{ue} to fit all the $N_{P}$ principal components.
To this end, we exploit the fact that when a dataset is scaled, also its optimal $k$-means clustering quantization levels experience the same scaling, as formalized in the following proposition.
\begin{proposition}
Let us consider a generic dataset $\mathbf{X}\in\mathbb{R}^{N\times D}$ containing $N$ $D$-dimensional points, and a real scalar $c>0$.
If the optimal $k$-means clustering quantization levels of $\mathbf{X}$ are $\{\mathbf{q}_i\}^*$, the optimal $k$-means clustering quantization levels of the dataset $\mathbf{Y}=c\mathbf{X}$ are $\{\mathbf{q}_{i}^{\prime}\}^*=c\{\mathbf{q}_i\}^*$.
\label{pro:3}
\end{proposition}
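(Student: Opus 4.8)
The plan is to show that the $k$-means distortion for the scaled dataset is, up to a fixed positive multiplicative constant, identical to the distortion for the original dataset, so that the two minimization problems share the same minimizers modulo the scaling. First I would write the $k$-means objective explicitly. Denoting the rows of $\mathbf{X}$ by $\mathbf{x}_1,\dots,\mathbf{x}_N\in\mathbb{R}^{D}$ and a candidate set of $k$ quantization levels by $\{\mathbf{q}_i\}$, the within-cluster distortion is
\begin{equation}
J_{\mathbf{X}}\left(\{\mathbf{q}_i\}\right)=\sum_{j=1}^{N}\min_{i}\Vert\mathbf{x}_j-\mathbf{q}_i\Vert^2,
\end{equation}
and $\{\mathbf{q}_i\}^*$ is by definition a minimizer of $J_{\mathbf{X}}$.

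Next I would exploit the elementary identity $\Vert c\mathbf{x}_j-c\mathbf{q}_i\Vert^2=c^2\Vert\mathbf{x}_j-\mathbf{q}_i\Vert^2$, valid since $c>0$. Because scaling by a positive constant preserves the ordering of distances, the nearest-level assignment of each scaled point $c\mathbf{x}_j$ to the scaled levels $\{c\mathbf{q}_i\}$ coincides with the assignment of $\mathbf{x}_j$ to $\{\mathbf{q}_i\}$, so the ``$\min_i$'' structure is carried over intact. Substituting into the distortion of $\mathbf{Y}=c\mathbf{X}$ then yields
\begin{equation}
J_{\mathbf{Y}}\left(\{c\mathbf{q}_i\}\right)=\sum_{j=1}^{N}\min_{i}\Vert c\mathbf{x}_j-c\mathbf{q}_i\Vert^2=c^2\,J_{\mathbf{X}}\left(\{\mathbf{q}_i\}\right).
\end{equation}
More generally, since $c>0$ makes the map $\{\mathbf{q}_i\}\mapsto\{c\mathbf{q}_i\}$ a bijection on candidate level sets, every configuration $\{\mathbf{p}_i\}$ for $\mathbf{Y}$ can be written as $\{c\mathbf{q}_i\}$ with $\{\mathbf{q}_i\}=\{\mathbf{p}_i/c\}$, giving $J_{\mathbf{Y}}(\{\mathbf{p}_i\})=c^2 J_{\mathbf{X}}(\{\mathbf{p}_i/c\})$.

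Finally I would close the argument by observing that minimizers are preserved under multiplication of the objective by the positive constant $c^2$. If $\{c\mathbf{q}_i^*\}$ were not optimal for $\mathbf{Y}$, some configuration $\{\mathbf{p}_i\}$ would satisfy $J_{\mathbf{Y}}(\{\mathbf{p}_i\})<J_{\mathbf{Y}}(\{c\mathbf{q}_i^*\})=c^2 J_{\mathbf{X}}(\{\mathbf{q}_i^*\})$; dividing by $c^2>0$ would give $J_{\mathbf{X}}(\{\mathbf{p}_i/c\})<J_{\mathbf{X}}(\{\mathbf{q}_i^*\})$, contradicting the optimality of $\{\mathbf{q}_i^*\}$. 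Hence $\{\mathbf{q}_{i}^{\prime}\}^*=c\{\mathbf{q}_i\}^*$, as claimed. The proof is essentially a change of variables, so I do not anticipate a serious obstacle; the only point requiring care is to verify that scaling by $c>0$ both preserves the nearest-level assignments and induces a bijection between level sets, properties that fail only in the degenerate case $c=0$ excluded by hypothesis.
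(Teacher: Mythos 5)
Your proof is correct and takes essentially the same route as the paper: both arguments rest on the change of variables $\mathbf{q}_i^{\prime}=c\,\mathbf{q}_i$, which rescales the $k$-means objective by the positive constant $c^2$ and therefore leaves the minimizers unchanged up to that scaling. The only cosmetic difference is that the paper carries explicit assignment indicators $r_{ni}$ through the joint minimization while you fold the assignment into a $\min_i$ (and spell out the bijection and contradiction steps a bit more fully), but the underlying idea is identical.
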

\begin{proof}
Please refer to Appendix~C.
\end{proof}

Such a unique codebook is constructed by introducing the matrix $\check{\mathbf{Z}}_{\text{train}}$, obtained by column-wise dividing $\widehat{\mathbf{Z}}_{\text{train}}$ by the vector $\boldsymbol{\sigma}=\left[\sigma_{1},\dots,\sigma_{N_{P}}\right]$.
In other words, all the columns in $\widehat{\mathbf{Z}}_{\text{train}}$ have been normalized in $\check{\mathbf{Z}}_{\text{train}}$ such that they have unitary variance.
Thus, we can now design a codebook which is optimal for all the columns in $\check{\mathbf{Z}}_{\text{train}}$, assuming that their elements are identically distributed.
To do so, we vectorize the matrix $\check{\mathbf{Z}}_{\text{train}}$ into the vector $\check{\mathbf{z}}_{\text{train}}$, and we apply $k$-means clustering to $\check{\mathbf{z}}_{\text{train}}$ considering $k=2^{r}$ with $r\in\left[1,b_{1}\right]$, where $b_{1}$ is the first element of $\mathbf{b}$.
We denote the quantization levels of the resulting codebook as $\{\mathbf{q}_i\}^*$.
Finally, according to Proposition~\ref{pro:3}, we obtain the quantization levels for the $n$-th column of $\widehat{\mathbf{Z}}_{\text{train}}$ (i.e., for the $n$-th principal component) by simply computing $\sigma_{n}\{\mathbf{q}_i\}^*$.

With this modification, the number of real $k$-means clustering parameters which are offloaded is given by
\begin{equation}
N_{O}^{k\text{-means}}=\left(2\sum_{r=1}^{b_{1}}2^{r}\right)+N_{P}+B,
\label{eqn:no-kmeans-mod}
\end{equation}
where the three additive terms are respectively due to the codebook containing the complex quantization levels, the vector $\boldsymbol{\sigma}$ used to adapt the codebook to each principal component, and the vector $\mathbf{m}$ used to obtain the bit allocation.

We finally remark that the number of offloaded parameters considered in this section is to be intended per user.
Nevertheless, in multi-user scenarios, this information is common to all the users connected to the \gls{bs}.
Thus, the model parameters and the $k$-means clustering levels can be broadcasted to all the intended users in these scenarios, without further increasing the offloading overhead.

\section{Numerical Results}
\label{sec:results}

\begin{figure}[t]
    \begin{centering} 
    \includegraphics[width=0.23\textwidth]{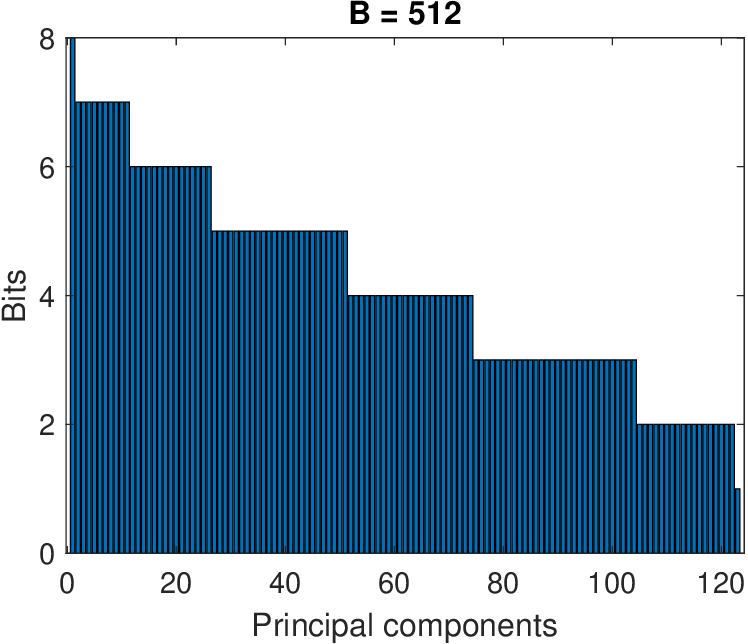}
    \includegraphics[width=0.23\textwidth]{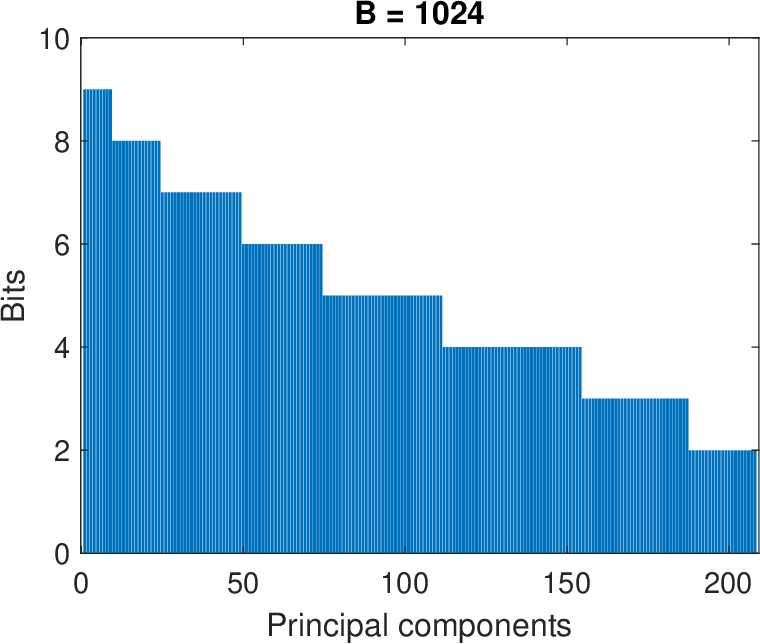}
    \includegraphics[width=0.23\textwidth]{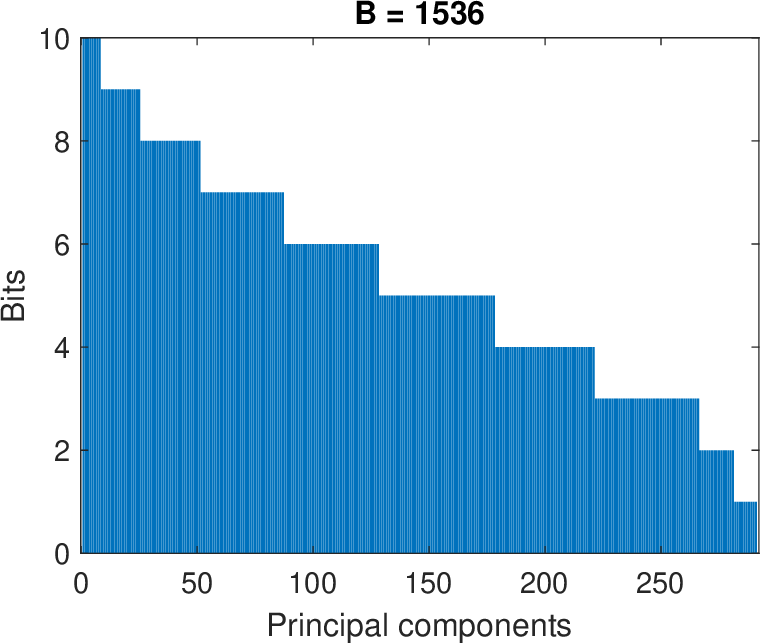}
    \includegraphics[width=0.23\textwidth]{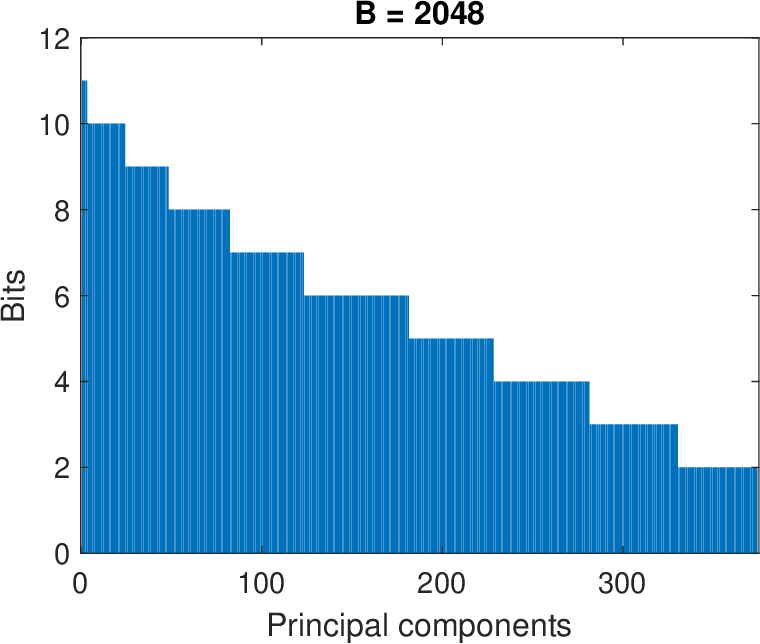}
    \par\end{centering}
    \caption{Bit allocation to the principal components for four different values of $B$.}
    \label{fig:b}
\end{figure}

In this section, we provide numerical results to evaluate our proposed \gls{csi} feedback strategy, and compare it with state-of-the-art \gls{dl} architectures proposed for the same scope.
The performance is measured in terms of downlink channel reconstruction quality, assessed by considering four metrics:
\begin{itemize}
\item the \gls{nmse} between the estimated downlink channel $\widehat{\mathbf{H}}_{\text{DL}}$ and the true downlink channel $\mathbf{H}_{\text{DL}}$, defined as
\begin{equation}
NMSE = \frac{\Vert\widehat{\mathbf{H}}_{\text{DL}}-\mathbf{H}_{\text{DL}}\Vert_F^2}{\Vert\mathbf{H}_{\text{DL}}\Vert_F^2};
\end{equation}
\item the cosine similarity between $\widehat{\mathbf{H}}_{\text{DL}}$ and $\mathbf{H}_{\text{DL}}$, defined as
\begin{equation}
\rho = \frac{1}{N_{C}}\sum_{n_{C}=1}^{N_{C}}\frac{\vert\hat{\mathbf{h}}_{n_{C}}^{H}\mathbf{h}_{n_{C}}\vert}{\Vert\hat{\mathbf{h}}_{n_{C}}\Vert\Vert\mathbf{h}_{n_{C}}\Vert},
\end{equation}
where $\hat{\mathbf{h}}_{n_{C}}$ and $\mathbf{h}_{n_{C}}$ are the $n_{C}$-th columns of $\widehat{\mathbf{H}}_{\text{DL}}$ and $\mathbf{H}_{\text{DL}}$, respectively;
\item the \gls{ber} obtained by precoding \gls{bpsk} symbols at the \gls{bs} based on the reconstructed \gls{csi};
\item the average sum rate obtained with zero-forcing beamforming and water-filling power allocation.
\end{itemize}
In addition, we assess the considered \gls{csi} feedback strategies in terms of offloading overhead (i.e., the number of offloaded model parameters) and in terms of number of training samples required.

\subsection{Dataset Description}

\begin{figure*}[t]
    \begin{centering} 
    \includegraphics[width=0.3\textwidth]{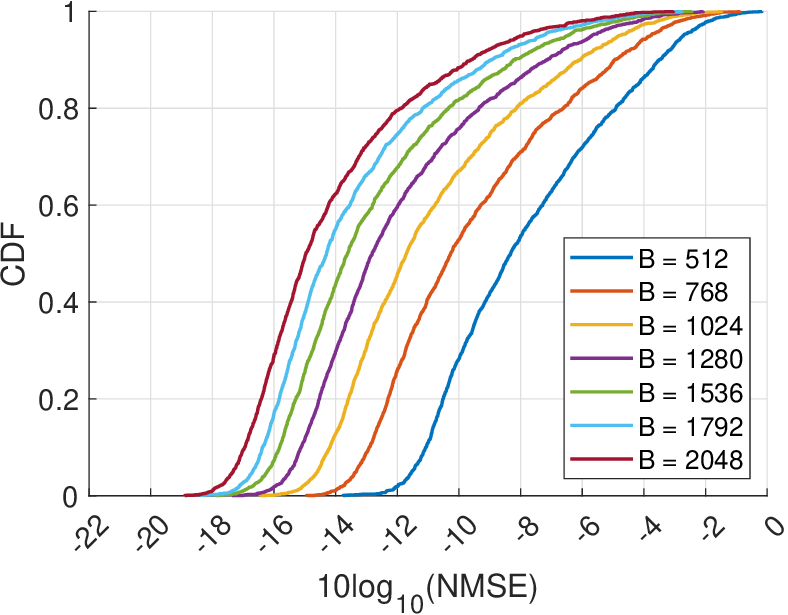}
    \includegraphics[width=0.3\textwidth]{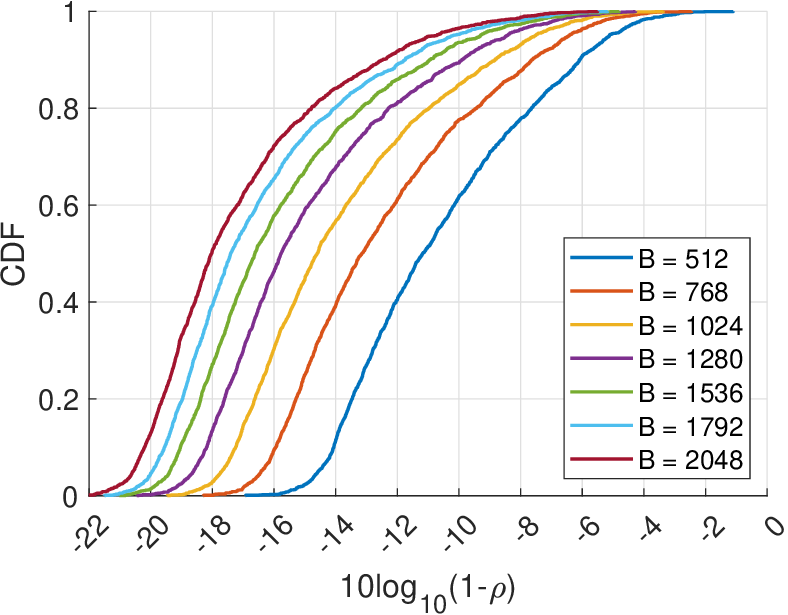}
    \includegraphics[width=0.3\textwidth]{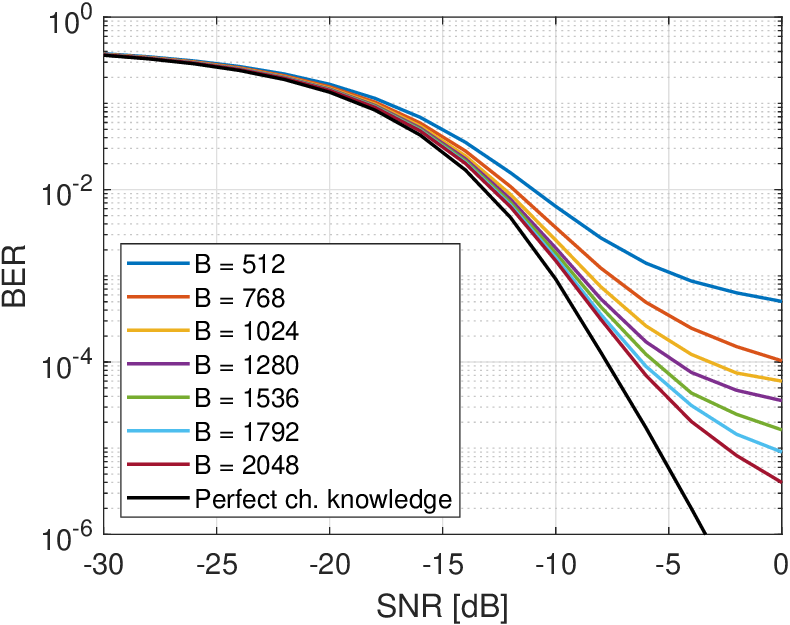}
    \par\end{centering}
    \caption{NMSE CDF (left), cosine similarity CDF (center), and BER (right) for different values of $B$.}
    \label{fig:nmse-rho-ber}
\end{figure*}

The channels for both training and test sets were generated with QuaDRiGa version 2.4, a \textsc{Matlab} based statistical ray-tracing channel simulator \cite{jae14}.
To this end, we consider an urban microcell in which single-antenna \glspl{ue} are uniformly distributed at a distance included in the interval $[20,100]$ m from the \gls{bs}, and at a height of 1.5 m over the ground.
The links between the \gls{bs} and the \glspl{ue} are \gls{nlos}, and $L=58$ paths are considered to simulate a rich multipath environment.
The channel model utilized to generate the channel samples is ``3GPP\_38.901\_UMi\_NLOS'' in QuaDRiGa.
The \gls{bs}, 20 m high, is a \gls{upa} composed of $N_{A}=64$ antennas.
These array elements are arranged on an $N_{X}\times N_{Y}$ shape with an antenna spacing of half wavelength, where $N_{X}=8$ and $N_{Y}=8$.
The uplink and downlink center frequencies are $f_{\text{UL}}=2.5$ GHz and $f_{\text{DL}}=2.62$ GHz.
In each frequency band, the channel bandwidth is $W=8$ MHz, divided into $N_{C}=160$ \gls{ofdm} subcarriers.
The large-scale fading parameters and the path directions are kept constant over the uplink-downlink frequency gap.
On the other hand, the small-scale fading effects depend on the path phase-shifts, in turn dependent on the frequency.
Thus, they are in practice uncorrelated over the frequency gap in rich multipath environments.

The channels resulting from the simulations contain the path loss, the large-scale fading, and small-scale fading effects.
Thus, in order to work with scaled channels, we normalize the dataset according to 
\begin{equation}
\mathbf{H}_{\text{UL}}\leftarrow L_{\text{UL}}\mathbf{H}_{\text{UL}},\:\mathbf{H}_{\text{DL}}\leftarrow L_{\text{DL}}\mathbf{H}_{\text{DL}},
\end{equation}
where the scalars $L_{\text{UL}}^{-1}$ and $L_{\text{DL}}^{-1}$ contain the path loss and the large-scale fading effects.
In total, $N=N_{\text{train}}+N_{\text{test}}$ users are randomly dropped in the cell and, for each user, the pair of channels $\mathbf{H}_{\text{UL}}$, $\mathbf{H}_{\text{DL}}$ is generated.
Firstly, the \emph{offline learning} stage is carried out considering $N_{\text{train}}$ noisy uplink channel matrices.
Secondly, the performance is tested on $N_{\text{test}}$ downlink channel matrices, corresponding to the users not considered during the training phase.
We assume that only noisy versions of the channel matrices are available both in the \emph{offline learning} and \emph{online development} stages, with $SNR_{\text{UL}}=10$ dB and $SNR_{\text{DL}}=10$ dB.
The ground truth channels $\mathbf{H}_{\text{DL}}$ are only used to evaluate the reconstruction performance.
Furthermore, we set $N_{\text{train}}=5000$ and $N_{\text{test}}=2000$.

\subsection{Performance Evaluation}

We now assess the performance of our \gls{pca}-based \gls{csi} feedback strategy.
Experimentally, we noticed that the two modifications proposed in Section~\ref{sec:overhead} do not impact visibly the performance if $\eta\approx 16$ or less.
Thus, the performance reported here refers to our proposed \gls{csi} feedback strategy including the two modifications with $\eta=16$, unless otherwise specified.
First of all, we inspect how the feedback bits are allocated to the principal components.
In Fig.~\ref{fig:b}, we report a graphical representation of the vector $\mathbf{b}$ resulting from Alg.~\ref{alg:bit-allocation} applied to four different values of $B$.
As expected, we observe that the number of considered principal components $N_{P}$ and the number of bits assigned to the first principal component $b_{1}$ increase as $B$ increases.

Fig.~\ref{fig:nmse-rho-ber} reports the \glspl{cdf} of the \gls{nmse} and the cosine similarity $\rho$ between the reconstructed and the true downlink channel matrices.
Besides, we plot the \gls{ber} obtained in downlink by precoding uncoded \gls{bpsk} symbols according to the reconstructed \gls{csi}.
The \glspl{cdf}, calculated over the whole test set, and the obtained \gls{ber} show that the reconstruction quality increases with $B$, and that, more importantly, no performance upper bound is present.
When the \gls{ber} metric is considered, the only performance upper bound present is given by the \gls{ber} obtained with perfect \gls{csi}.
In fact, $B$ could be potentially increased until all the principal components are considered to embed the \gls{csi}.
In this extreme case, approximately perfect channel reconstruction could be achieved.
This property cannot be found in \gls{dl} strategies recently proposed for \gls{csi} feedback.
In these strategies, the reconstruction quality is always bounded by the latent space dimensionality, which is fixed regardless of the value of $B$.

Now, we investigate how the channel reconstruction quality varies according to the antenna number $N_A$ and subcarrier number $N_C$.
To this end, we consider four different \gls{csi} dimensions $N_A\times N_C$, namely $32\times 80$, $32\times 160$, $64\times 80$, and $64\times 160$.
We show the reconstruction quality in terms of cosine similarity obtained for these channels when $B=512$ and $B=1024$ feedback bits are used in Fig.~\ref{fig:NANC-Ntrain} (left).
As expected, the reconstruction quality is higher for \gls{csi} matrices with lower dimensionalities.
We also observe that the channels with dimensions $32\times 160$ are better reconstructed than the channels with dimensions $64\times 80$, despite these two types of channel matrices have the same number of entries.
The reason is that frequency correlation is higher than spatial correlation in an \gls{ofdm} channel matrix, as also noticed in \cite{wen18}.
Thus, the information along the frequency dimension can be compressed more efficiently.

A problem concerning \gls{ml} approaches is the need of training samples.
Experimentally, we observe that $N_{\text{train}}=5000$ training samples are necessary to well-train both \gls{pca} and $k$-means clustering.
For this reason, this is the number of training samples used to present the obtained results in the manuscript.
However, we explore now the performance degradation experienced when this number is lowered.
In Fig.~\ref{fig:NANC-Ntrain} (right), the \gls{cdf} of the cosine similarity obtained with different values of $N_{\text{train}}$ is reported.
We observe that the reconstruction quality only slightly degrades when $B=1536$, while it remains approximately stable when the number of bits employed is low, i.e., $B=512$.
This means that the principal components are well reconstructed even with less training samples.
However, $k$-means clustering is the bottleneck that precludes the use of a lower $N_{\text{train}}$, especially when the number of quantization levels is high.

\begin{figure}[t]
    \begin{centering} 
    \includegraphics[width=0.23\textwidth]{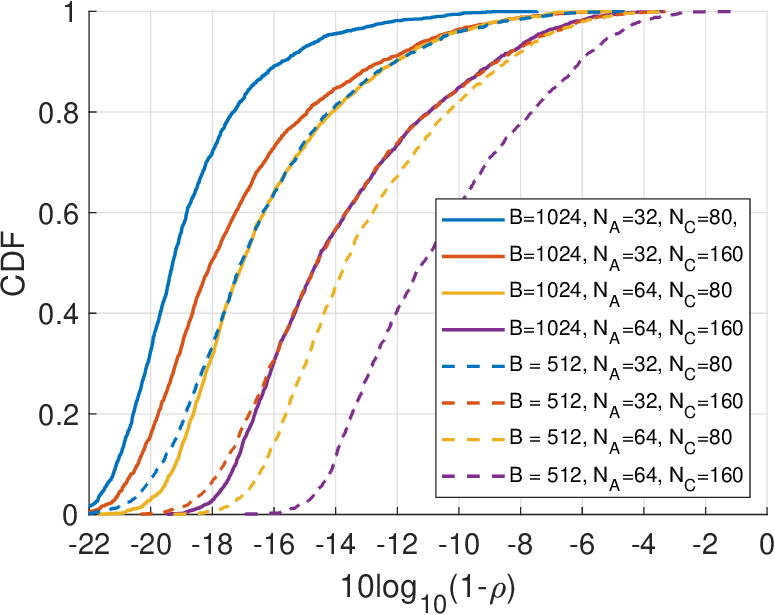}
    \includegraphics[width=0.23\textwidth]{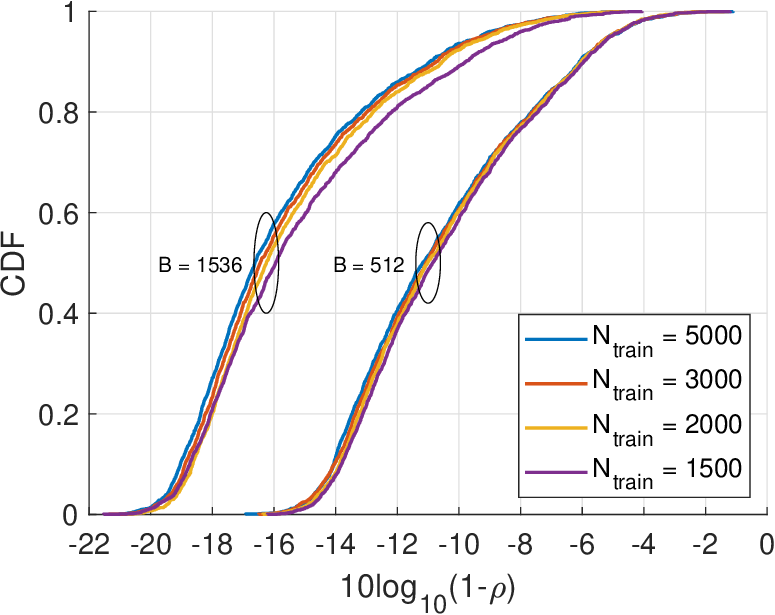}
    \par\end{centering}
    \caption{Cosine similarity CDFs for different CSI dimensions (left) and for different number of training samples (right).}
    \label{fig:NANC-Ntrain}
\end{figure}

Finally, we report the performance of multi-user precoding carried out with the reconstructed downlink channel matrices.
More precisely, we use the reconstructed downlink channels to serve $K=8$ \glspl{ue} through zero-forcing beamforming with water-filling power allocation.
Zero-forcing beamforming is applied independently to each subcarrier and the resulting sum rate is averaged over the $N_{C}$ subcarriers.
The water-filling power allocation is designed assuming that the reconstructed downlink channel is the perfect channel.
Thus, the multi-user interference is not captured in the water-filling solution.
In order to obtain reliable results, Monte Carlo simulations are run by randomly selecting, in each simulation, $K=8$ downlink channels among the $N_{\text{test}}$ available in the test set.
Fig.~\ref{fig:zfbf} reports the average sum rate obtained by compressing the \gls{csi} with four different feedback lengths $B$.
Here, our \gls{csi} feedback strategy, namely ``PCA'', is compared with two baseline strategies based on \gls{dl} architectures.
``AE'' is the convolutional autoencoder proposed in \cite{riz21}, while ``CsiNetPro'' is the architecture proposed in \cite{li20}.
In addition, the learning-based results are compared with the classical theory-based approach ``IDFT'', also used for comparison in \cite{riz21}.
According to the ``IDFT'' approach, the noisy \gls{csi} $\widetilde{\mathbf{H}}_{\text{DL}}$ is firstly transformed into the space-delay domain through a \gls{idft2}.
Then, only the elements of the first columns of the resulting matrix are considered for the feedback.
At the BS, the \gls{csi} is reconstructed through a zero-padding operation followed by a \gls{dft2}.
Note that depending on how many columns are retained, the \gls{csi} can be compressed into latent spaces with different dimensionalities.
For these three baselines, the number in the legend name denotes the latent space dimensionality.
For conciseness, in Fig.~\ref{fig:zfbf} we report the performance of ``IDFT'' applied with the best latent space dimensionality for each $B$.

\begin{figure}[t]
    \begin{centering} 
    \includegraphics[width=0.23\textwidth]{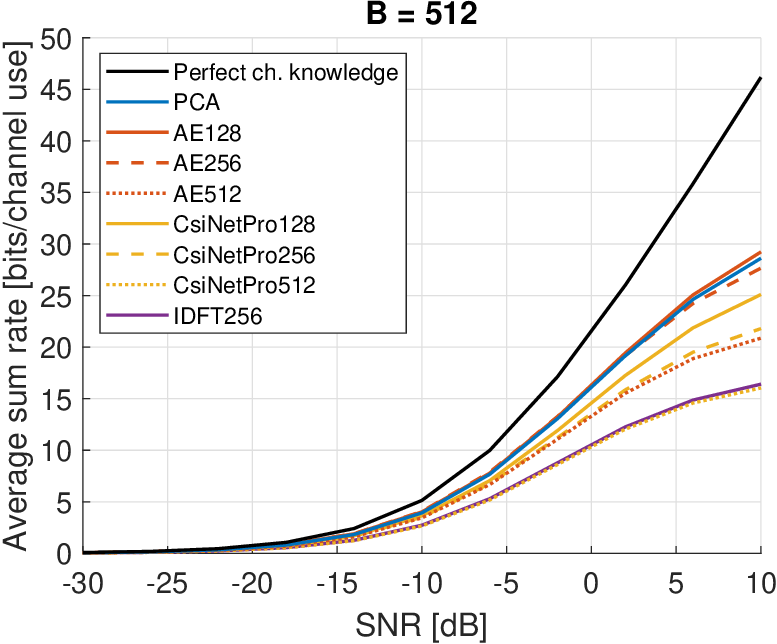}
    \includegraphics[width=0.23\textwidth]{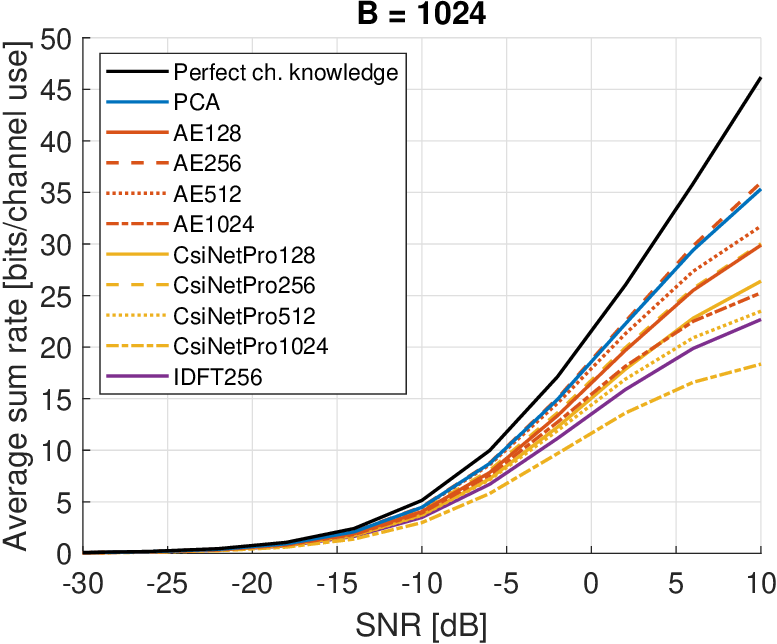}
    \includegraphics[width=0.23\textwidth]{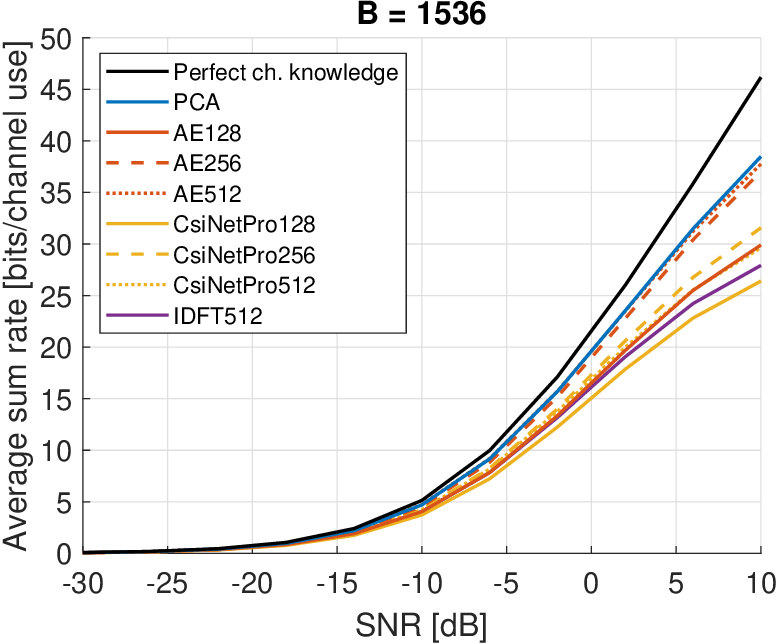}
    \includegraphics[width=0.23\textwidth]{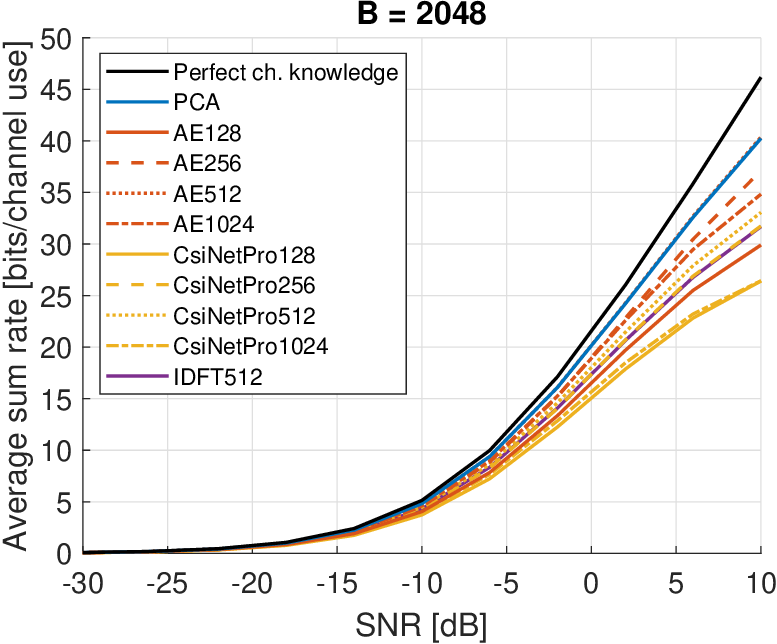}
    \par\end{centering}
    \caption{Average sum rate with zero-forcing beamforming for four different values of $B$. ``PCA'' has been trained with 2000 samples, ``AE'' and ``CsiNetPro'' have been trained with 40000 samples.}
    \label{fig:zfbf}
\end{figure}

For the baseline strategies, the feedback bits are allocated uniformly to all the latent space dimensions, and the quantization levels are determined with $k$-means clustering\footnote{The proposed bit allocation strategy could be applied also to \gls{dl} architectures.
However, in these architectures, the compressed \gls{csi} entries are approximately identically distributed for all latent space dimensions.
For this reason, the proposed bit allocation applied to autoencoders boils down to uniform bit allocation, with no visible improvement with respect to the latter.}.
Preliminary experiments confirmed that $N_{\text{train}}=5000$ training samples are not enough to well-train the ``AE'' and ``CsiNetPro'' architectures.
With this amount of training data, these \gls{dl} architectures are not able to learn an expressive representation of the channels in the latent space.
For this reason, in Fig.~\ref{fig:zfbf}, we report the performance of well-trained ``AE'' and ``CsiNetPro'' architectures, i.e., trained with 40000 training samples.

From Fig.~\ref{fig:zfbf}, we notice that the learning-based solutions always outperform ``IDFT''.
Among the two considered \gls{dl} architectures, ``AE'' outperforms ``CsiNetPro'' for every considered $B$, as already highlighted in \cite{riz21}.
Furthermore, for the \gls{dl} architectures, we notice that there is not a unique latent space dimensionality $N_{L}$ that is optimal for every feedback length $B$.
This result supports our intuition that $N_{L}$ should be designed according to the number of bits $B$.
Since we consider such an adaptive design, our strategy ``PCA'' performs approximately as the best autoencoder ``AE'' in every $B$ considered.
Thus, \gls{pca} can be successfully applied to compress the channel matrix with a significantly reduced training set.

\begin{figure}[t]
    \centering
    \includegraphics[width=0.44\textwidth]{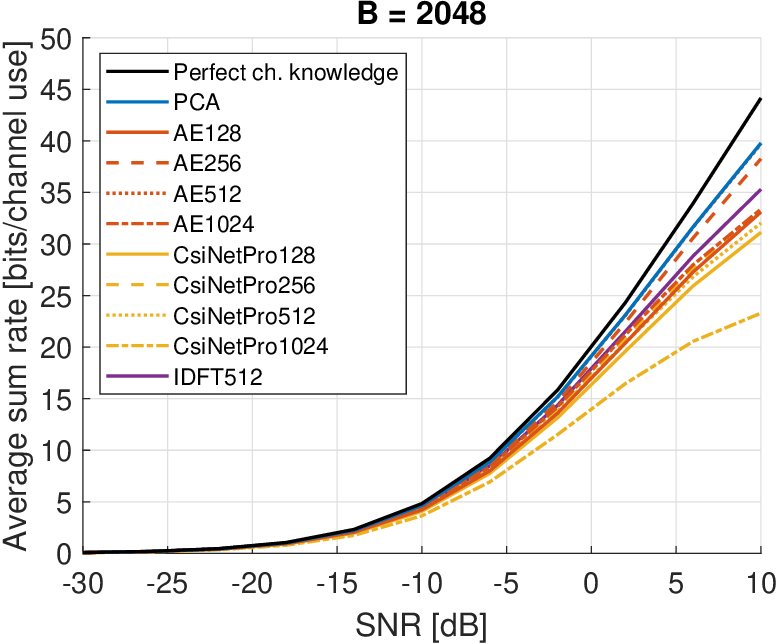}
    \caption{Average sum rate with zero-forcing beamforming for $B=2048$. The \gls{ml} models have been trained on urban microcell channel samples and tested on rural macrocell channels.}
    \label{fig:zfbf-UMaNLOS}
\end{figure}

In this study and in the related literature, it is assumed that the trained \gls{ml} algorithm is used to compress channel matrices with the same fading distribution as seen during the training session.
However, when the channel statistics vary, the \gls{csi} feedback strategy should be able to compress differently distributed channels before an updated trained model becomes available.
To explore the performance under different channel models, we train our PCA-based approach and the baseline \gls{dl} architectures on uplink channel samples drawn from the urban microcell scenario ``3GPP\_38.901\_UMi\_NLOS''.
After this training session, we test the \gls{csi} feedback methods on downlink channel samples drawn from the rural macrocell scenario ``3GPP\_38.901\_RMa\_NLOS''.
Note that the multipath in the latter environment is less severe than in the former, since only $L=11$ paths are considered by the QuaDRiGa simulator.
The only elements in common between the training and testing scenarios are the number of antennas and subcarriers, the bandwidth and the center frequencies of uplink and downlink bands.
The obtained sum rate is reported in Fig.~\ref{fig:zfbf-UMaNLOS}, for $B=2048$.
Fig.~\ref{fig:zfbf-UMaNLOS} shows that the knowledge obtained in a rich multipath environment can be successfully reused to compress sparser \gls{csi} matrices.
Our \gls{pca}-based strategy performs as the best ``AE'' architecture, i.e., ``AE512'', and largely outperforms ``CsiNetPro'' and ``IDFT512'' baselines.

To analyze the effect of the parameter $\eta$ on the performance, we plot the average sum rate versus the feedback length $B$, by fixing the \gls{snr} experienced by each of the $K$ user to $10$ dB.
In Fig.~\ref{fig:zfbf-snr10}, the performance of our strategy is compared with the two considered baseline \gls{dl} architectures, each with four different $N_{L}$.
We notice that the performance of each \gls{dl} architecture saturates as $B$ increases due to the fixed latent space dimensionality.
Thus, each $N_{L}$ turns out to be optimal only for some specific values of $B$.
Mode precisely, ``AE128'' is the best autoencoder when $B=512$, ``AE256'' is preferred when $B\in\{768,1024,1280\}$, while ``AE512'' is the autoencoder achieving the highest average sum rate when $B\in\{1536,1792,2048\}$.
The four ``CsiNetPro'' achieve a lower average sum rate than their respective ``AE'' architectures.
Conversely, the performance of our \gls{pca}-based strategy with $\eta=16$ is comparable with the one of the best autoencoder ``AE'' for each value of $B$.
We notice that the performance deterioration caused by increasing $\eta$ is minimal.

\begin{figure}[t]
    \centering
    \includegraphics[width=0.44\textwidth]{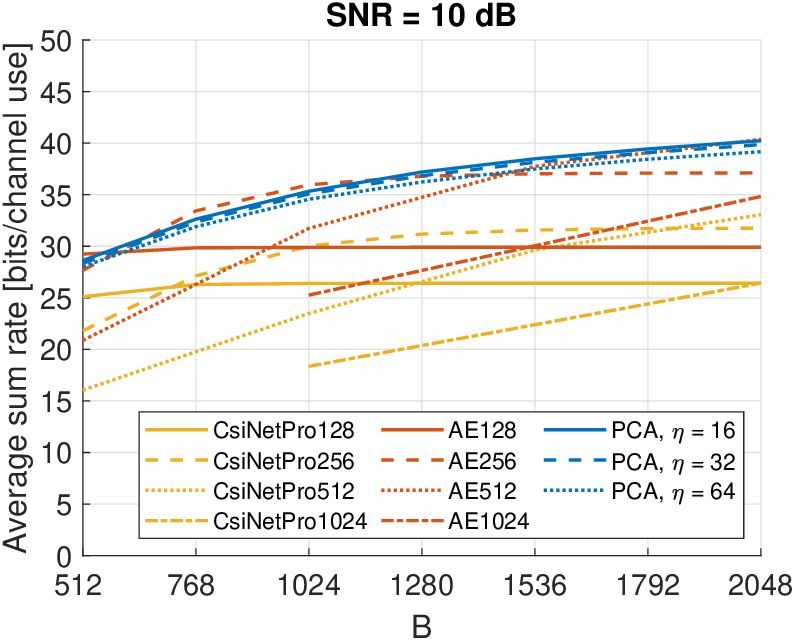}
    \caption{Average sum rate with zero-forcing beamforming vs feedback length $B$, with \gls{snr} $=10$ dB at each \gls{ue}. ``PCA'' has been trained with 2000 samples, ``AE'' and ``CsiNetPro'' have been trained with 40000 samples.}
    \label{fig:zfbf-snr10}
\end{figure}

\subsection{Offloading Impact}

We now analyze the impact of the two modifications proposed in Section~\ref{sec:overhead} on the overhead caused by parameter offloading.
The effect of the first modification can be analyzed by comparing \eqref{eqn:no-model} with \eqref{eqn:no-model-mod}.
When computing $N_{O}^{\text{model}}$, the number of offloaded parameters due to $\boldsymbol{\mu}_{\text{train}}$ can be neglected, since in practice $N_{P}\gg 1$.
Thus, we have that the first modification decreases the number of offloaded model parameters of approximately $\frac{2}{3}\eta$ times.
In our numerical scenario, $B=2048$ yields $N_{P}=374$.
Hence, without the first modification, we have $N_{O}^{\text{model}}=7.68\times 10^6$ according to \eqref{eqn:no-model}.
Conversely, when the first modification is applied with $\eta=16$, $N_{O}^{\text{model}}$ is reduced to $0.739\times 10^6$ according to \eqref{eqn:no-model-mod}.

Considering the offloaded $k$-means clustering parameters, the effect of the second modification is given by comparing \eqref{eqn:no-kmeans} with \eqref{eqn:no-kmeans-mod}.
Also to quantify the effect of the second modification, we refer to the case in which the maximum feedback length allowed is $B=2048$, yielding $N_{P}=374$ and $b_{1}=11$ in our numerical scenario.
Without the second modification, we have $N_{O}^{k\text{-means}}=243\times 10^3$ according to \eqref{eqn:no-kmeans}.
Conversely, when the second modification is applied, $N_{O}^{k\text{-means}}$ is reduced to $10.6\times 10^3$ according to \eqref{eqn:no-kmeans-mod}.

\begin{figure}[t]
    \centering
    \includegraphics[width=0.44\textwidth]{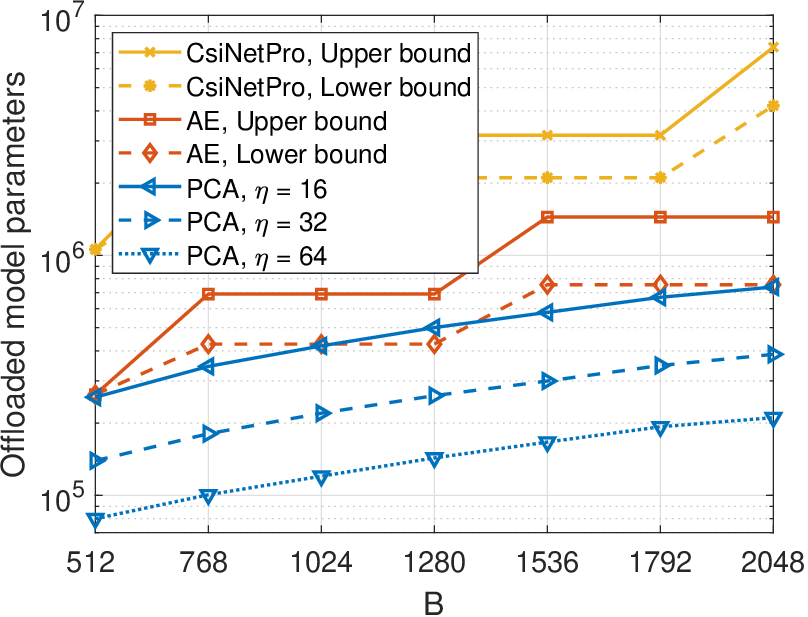}
    \caption{Number of offloaded model parameters vs feedback length $B$. ``PCA'' has been trained with 2000 samples, ``AE'' and ``CsiNetPro'' have been trained with 40000 samples.}
    \label{fig:offload}
\end{figure}

When both modifications are applied, we notice that $N_{O}^{\text{model}}\gg N_{O}^{k\text{-means}}$.
Thus, now we compare our \gls{pca}-based \gls{csi} feedback strategy with the two baseline \gls{dl}-based strategies in terms of offloaded model parameters, neglecting the offloaded $k$-means clustering parameters.
This comparison is carried out in Fig.~\ref{fig:offload}, where $N_{O}^{\text{model}}$ is reported for our strategy when both modifications are considered, with $\eta\in\{16,32,64\}$.
For the two considered \gls{dl} architectures, an upper bound and a lower bound on the number of offloaded model parameters are reported.
The upper bound represents the case in which we offload all the necessary encoders to generate a feedback of length less or equal than $B$.
These encoders can be retrieved from Fig.~\ref{fig:zfbf-snr10}, considering only the most performing ones for feedback lengths less or equal than $B$.
The lower bound represents the case in which we apply the parallel multiple-rate framework proposed in \cite{guo20a}.
In this framework, among the necessary encoders, only the encoder with the largest latent space dimensionality is offloaded.
The other encoders are obtained at the \gls{ue} by just considering a reduced number of latent space dimensions from this encoder.
Note that this framework inevitably causes a slight degradation in the channel reconstruction performance, as stated in \cite{guo20a}, that is not assessed in this study.
We remark that the parameter number of \gls{dl} models can be reduced at the cost of a performance degradation with techniques such as pruning and knowledge distillation.
However, to offer a clear comparison in terms of both reconstruction performance and offloading overhead, we do not alter the architectures proposed in \cite{riz21,li20}.

\begin{table*}[t]
\centering
\caption{Performance comparison with feedback length $B=2048$.}
\begin{tabular}{@{}cccc@{}}
\toprule
               & \begin{tabular}[c]{@{}c@{}}Average sum rate\\$[$bits/channel use$]$\end{tabular} &
                 \begin{tabular}[c]{@{}c@{}}Number of\\offloaded model parameters\end{tabular} &
                 \begin{tabular}[c]{@{}c@{}}Number of\\training samples\end{tabular}\\
\midrule
PCA, $\eta=16$ & 40.2 & $0.739\times 10^6$  & $5\times 10^3$\\
PCA, $\eta=32$ & 39.9 & $0.387\times 10^6$  & $5\times 10^3$\\
PCA, $\eta=64$ & 39.2 & $0.211\times 10^6$  & $5\times 10^3$\\
\midrule
AE             & 40.4 & $1.45\times 10^6$ & $40\times 10^3$\\
CsiNetPro & 33.1 & $7.37\times 10^6$ & $40\times 10^3$\\
\bottomrule
\end{tabular}
\label{tab:summary}
\end{table*}

Fig.~\ref{fig:offload} shows that our strategy is convenient also in terms of offloading overhead compared with the two \gls{dl}-based strategies, not only in terms of training samples required.
When $\eta=16$, our strategy and the ``AE'' strategy used with the framework proposed in \cite{guo20a} cause similar offloading overhead.
However, it is possible to significantly reduce this overhead of our strategy by only slightly affecting the reconstruction performance thanks to the adaptive parameter $\eta$.

Finally, in Tab~\ref{tab:summary} we summarize the comparison between our \gls{pca}-based strategy and the two considered \gls{dl}-based strategies, for a feedback length $B=2048$.
Here, the offloaded parameter number for the deep architectures are computed without considering the framework in \cite{guo20a}, since it would cause a non-negligible performance degradation.
Compared to ``AE'', our strategy using $\eta=16$ halves the offloading overhead and reduces the number of training samples by eight times.
If $\eta=64$ is considered, the offloading overhead is reduced by 6.87 times, at the cost of a 2.97\% average sum rate loss.
Compared to ``CsiNetPro'', our strategy improves the average sum rate by 21.7\% (resp. 18.5\%), and reduces the offloading overhead by 10.0 (resp. 34.9) times, when $\eta=16$ (resp. $\eta=64$).
Considering maximum feedback lengths ranging from $B=521$ to $B=2048$, our strategy used with $\eta=64$ improves the sum rate by 17\% on average, and reduces the offloading overhead by 23.4 times compared to ``CsiNetPro''.

\section{Conclusion}
\label{sec:conclusion}

In this study, we propose a novel strategy to design the \gls{csi} feedback in \gls{fdd} massive \gls{mimo} systems.
This strategy allows to design the feedback with variable length, while reducing the number of parameter offloaded from the \gls{bs} to the \gls{ue}.
Firstly, the channel is compressed using \gls{pca}, with a latent space dimensionality adapted to the number of available feedback bits.
Then, the feedback bits are allocated to the principal components by minimizing a properly defined \gls{nmse} distortion.
Finally, the quantization levels are determined with $k$-means clustering.
In addition, we allow an adaptive number of offloaded model parameters, which can be adjusted to trade offloading overhead and \gls{csi} reconstruction quality.
Such an adaptive offloading overhead has been never considered in previous literature employing \gls{dl} approaches.

Through simulations, we compare our strategy with state-of-the-art \gls{dl} architectures proposed for the same scope.
Numerical results show that our strategy performs better or approximately equal, to \gls{dl} architectures well-trained on larger datasets, in terms of sum rate obtained with multi-user precoding.
The offloading overhead can be significantly reduced in our strategy, with approximately no impact on the \gls{csi} reconstruction.
At the same time, \gls{pca} is characterized by a lightweight training phase, requiring a reduced number of training samples.
This lightweight training phase enables, in practical developments, more frequent trainings.
In this way, the compression strategy could be better maintained updated, as the environment evolves in time.
Compared to ``CsiNetPro'', our strategy using $\eta=64$ improves the sum rate by 17\% on average, reduces the offloading overhead by 23.4 times, and requires eight times fewer training parameters.

\appendix

\subsection{Proof of Proposition~\ref{pro:1}}
\label{subsec:proof1}

Since the bit allocation $\mathbf{b}^\prime=\left[b_1^\prime,\dots,b_{N_{A}N_{C}}^\prime\right]$ contains one more bit than $\mathbf{b}=\left[b_{1},\dots,b_{N_{A}N_{C}}\right]$, there is always a principal component $m$ such that $b_{m}^{\prime}=b_{m}+C$, with $C\geq1$.
Thus, to prove Proposition~\ref{pro:1}, we need to prove that the bit allocation $\left[b_{1},\dots,b_{m}+1,\dots,b_{N_{A}N_{C}}\right]$ is better than any other allocation of $B+1$ bits $\left[b_1^\prime,\dots,b_m+C,\dots,b_{N_{A}N_{C}}^\prime\right]$.
This is equivalent to saying that the distortion caused by $\left[b_{1},\dots,b_{m}+1,\dots,b_{N_{A}N_{C}}\right]$ is less than the distortion caused by $\left[b_1^\prime,\dots,b_m+C,\dots,b_{N_{A}N_{C}}^\prime\right]$, that is
\begin{multline}
d_{1}\left(b_1\right)+\dots+d_{m}\left(b_m+1\right)+\dots+d_{N_{A}N_{C}}\left(b_{N_{A}N_{C}}\right)\\
<d_{1}\left(b_1^\prime\right)+\dots+d_{m}\left(b_m+C\right)+\dots+d_{N_{A}N_{C}}\left(b_{N_{A}N_{C}}^\prime\right)\label{eqn:proof1-1}
\end{multline}

For convenience, we rewrite \eqref{eqn:proof1-1} as
\begin{multline}
\left[d_{1}\left(b_1\right)+\dots+d_{m}\left(b_m\right)+\dots+d_{N_{A}N_{C}}\left(b_{N_{A}N_{C}}\right)\right]\\
+\left[d_{m}\left(b_m+1\right)-d_{m}\left(b_m\right)\right]\\
<\left[d_{1}\left(b_1^\prime\right)+\dots+d_{m}\left(b_m+C-1\right)+\dots+d_{N_{A}N_{C}}\left(b_{N_{A}N_{C}}^\prime\right)\right]\\
+\left[d_{m}\left(b_m+C\right)-d_{m}\left(b_m+C-1\right)\right],\label{eqn:proof1-2}
\end{multline}
where two additive terms are highlighted in both sides of the inequality.
Thus, \eqref{eqn:proof1-2} can be proved by independently verifying the following two inequalities
\begin{multline}
d_{1}\left(b_1\right)+\dots+d_{m}\left(b_m\right)+\dots+d_{N_{A}N_{C}}\left(b_{N_{A}N_{C}}\right)\\
<d_{1}\left(b_1^\prime\right)+\dots+d_{m}\left(b_m+C-1\right)+\dots+d_{N_{A}N_{C}}\left(b_{N_{A}N_{C}}^\prime\right)\label{eqn:proof1-3}
\end{multline}
\begin{equation}
d_{m}\left(b_m+1\right)-d_{m}\left(b_m\right)\leq d_{m}\left(b_m+C\right)-d_{m}\left(b_m+C-1\right).
\label{eqn:proof1-4}
\end{equation}
Firstly, \eqref{eqn:proof1-3} holds since $\mathbf{b}=\left[b_{1},\dots,b_{N_{A}N_{C}}\right]$ is the optimal bit allocation of $B$ bits.
Thus, any other allocation of $B$ bits $\left[b_1^\prime,\dots,b_m+C-1,\dots,b_{N_{A}N_{C}}^\prime\right]$ causes an higher distortion.
Secondly, to prove \eqref{eqn:proof1-4}, we resort to an explicit expression of the \gls{mse} distortion as a function of the bit number.
Since such a distortion in the case of $k$-means clustering is not available in close form, we consider the distortion-rate function, which provides a lower bound on the \gls{mse} distortion \cite{cov99}.
According to information theory, the distortion-rate function of a \gls{cscg} random variable $X_{n}\sim\mathcal{CN}\left(0,\sigma_{n}^{2}\right)$ quantized with $b_{n}$ bits is given by $d_{n}\left(b_{n}\right)=\sigma_{n}^{2}2^{-b_{n}}$ \cite{cov99}.
Thus, assuming that on the $n$-th principal component the training set is distributed as $\mathcal{CN}\left(0,\sigma_{n}^{2}\right)$, \eqref{eqn:proof1-4} becomes
\begin{align}
\sigma_m^{2}2^{-\left(b_m+1\right)}-\sigma_m^{2}2^{-b_m}&\leq\sigma_m^{2}2^{-\left(b_m+C\right)}-\sigma_m^{2}2^{-\left(b_m+C-1\right)}\\
\sigma_m^{2}2^{-b_m}\left(2^{-1}-1\right)&\leq\sigma_m^{2}2^{-\left(b_m+C-1\right)}\left(2^{-1}-1\right),
\end{align}
which is verified since $C\geq1$ implies $2^{-b_m}\geq 2^{-\left(b_m+C-1\right)}$.

\subsection{Proof of Proposition~\ref{pro:2}}
\label{subsec:proof2}

To prove Proposition~\ref{pro:2}, we prove that the optimal $b_{n}$ cannot be greater than $b_{n-1}$ since the bit allocation $b_{n-1}=C+\Delta C$, $b_{n}=C$ is always better than $b_{n-1}=C$, $b_{n}=C+\Delta C$, where $C\in\mathbb{N}$, $\Delta C\in\mathbb{N^*}$.
This means that
\begin{equation}
d_{n-1}\left(C+\Delta C\right)+d_{n}\left(C\right)<d_{n-1}\left(C\right)+d_{n}\left(C+\Delta C\right),
\label{eqn:proof2-1}
\end{equation}
since the best bit allocation to the two principal components $n-1$ and $n$ is the one that minimizes the distortion $d_{n-1}+d_{n}$.
As in Appendix~A, we write the distortion caused by quantizing the $n$-th principal component with $b_{n}$ bits as $d_{n}\left(b_{n}\right)=\sigma_{n}^{2}2^{-b_{n}}$.
Thus, \eqref{eqn:proof2-1} can be rewritten as
\begin{align}
\sigma_{n-1}^{2}2^{-\left(C+\Delta C\right)}+\sigma_{n}^{2}2^{-C}&<\sigma_{n-1}^{2}2^{-C}+\sigma_{n}^{2}2^{-\left(C+\Delta C\right)}\\
\sigma_{n}^{2}2^{-C}-\sigma_{n}^{2}2^{-\left(C+\Delta C\right)}&<\sigma_{n-1}^{2}2^{-C}-\sigma_{n-1}^{2}2^{-\left(C+\Delta C\right)}\\
\sigma_{n}^{2}2^{-C}\left(1-2^{-\Delta C}\right)&<\sigma_{n-1}^{2}2^{-C}\left(1-2^{-\Delta C}\right),
\end{align}
which is verified since the principal components are ordered, i.e., $\sigma_{n}^{2}<\sigma_{n-1}^{2}$.

\subsection{Proof of Proposition~\ref{pro:3}}
\label{subsec:proof3}

$k$-means clustering groups the $N$ dataset points into $k$ clusters, each assigned to a unique quantization level.
This is done by minimizing the sum of the square distances between each data point and the quantization level assigned to its cluster \cite{bis06}.
Let us denote with $\mathbf{x}_{n}$ the $n$-th point in the dataset $\mathbf{X}$.
The $k$-means clustering process can be formalized by introducing the binary indicator $r_{ni}\in\{0,1\}$ such that $r_{ni}=1$ if $\mathbf{x}_n$ is assigned to the $i$-th cluster, and $r_{ni}=0$ otherwise.
The optimal clustering indicators $\{r_{ni}\}^*$ and quantization levels $\{\mathbf{q}_i\}^*$ are given by
\begin{equation}
\{r_{ni}\}^*,\{\mathbf{q}_i\}^*
=\underset{\{r_{ni}\},\{\mathbf{q}_i\}}{\mathsf{\mathrm{min}}} \sum_{n=1}^N \sum_{i=1}^k r_{ni}\left\Vert \mathbf{x}_n-\mathbf{q}_i \right\Vert^{2}.\label{eqn:kmeans-X}
\end{equation}

Now, let us consider the scaled dataset $\mathbf{Y}=c\mathbf{X}$, in which $\mathbf{y}_{n}$ is the $n$-th point.
In this case, the optimal clustering indicators $\{r_{ni}^\prime\}^*$ and quantization levels $\{\mathbf{q}_i^\prime\}^*$ are
\begin{align}
\{r_{ni}^\prime\}^*,\{\mathbf{q}_i^\prime\}^*
& =\underset{\{r_{ni}^\prime\},\{\mathbf{q}_i^\prime\}}{\mathsf{\mathrm{min}}} \sum_{n=1}^N \sum_{i=1}^k r_{ni}^\prime\left\Vert \mathbf{y}_n-\mathbf{q}_i^\prime \right\Vert^2\\
& =\underset{\{r_{ni}^\prime\},\{\mathbf{q}_i^\prime\}}{\mathsf{\mathrm{min}}} \sum_{n=1}^N \sum_{i=1}^k r_{ni}^\prime\left\Vert c\mathbf{x}_n-\mathbf{q}_i^\prime \right\Vert^2\label{eqn:kmeans-Y1}\\
& =\underset{\{r_{ni}^\prime\},\{\mathbf{q}_i^\prime\}}{\mathsf{\mathrm{min}}} \sum_{n=1}^N \sum_{i=1}^k r_{ni}^\prime\left\Vert \mathbf{x}_n-\frac{1}{c}\mathbf{q}_i^\prime \right\Vert^2\label{eqn:kmeans-Y2},
\end{align}
where the objective function in \eqref{eqn:kmeans-Y1} has been multiplied by the scalar $c^{-2}$ to obtain \eqref{eqn:kmeans-Y2}.
Noting that the problem in \eqref{eqn:kmeans-Y2} is equal to \eqref{eqn:kmeans-X}, it holds $\{r_{ni}^\prime\}^*=\{r_{ni}\}^*$ and $\frac{1}{c}\{\mathbf{q}_i^\prime\}^*=\{\mathbf{q}_i\}^*$.
Thus, we verified that $\{\mathbf{q}_i^\prime\}^*=c\{\mathbf{q}_i\}^*$.

\bibliographystyle{IEEEtran}
\bibliography{IEEEabrv,main}

\end{document}